\newcommand{\dotDelta}{{\vphantom{\Delta}\mathpalette\d@tD@lta\relax}}
\newcommand{\d@tD@lta}[2]{%
	\ooalign{\hidewidth$\m@th#1\mkern-1mu\cdot$\hidewidth\cr$\m@th#1\Delta$\cr}%
}
\newcommand{\F}{\mathbb{F}}
\newcommand{\C}{\mathbb{C}}
\newcommand{\T}{\mathbb{T}}
\newcommand{\R}{\mathbb{R}}
\newcommand{\Z}{\mathbb{Z}}
\newcommand{\ket}[1]{|#1\rangle}
\newcommand{\bra}[1]{\langle#1|}
\newcommand{\ketbra}[2]{|#1\rangle\langle#2|}
\newcommand{\braket}[2]{\langle#1\vert#2\rangle}
\newcommand{\beq}{\begin{equation}}
	\newcommand{\eeq}{\end{equation}}
\newcommand{\beqn}{\begin{equation*}}
	\newcommand{\eeqn}{\end{equation*}}
\newcommand{\beqr}{\begin{eqnarray}}
	\newcommand{\eeqr}{\end{eqnarray}}
\newcommand{\beqrn}{\begin{eqnarray*}}
	\newcommand{\eeqrn}{\end{eqnarray*}}
\newcommand{\bmline}{\begin{multline}}
	\newcommand{\emline}{\end{multline}}
\newcommand{\bmlinen}{\begin{multline*}}
	\newcommand{\emlinen}{\end{multline*}}
\newtheorem{theorem}{Theorem}[section]
\newtheorem{lemma}[theorem]{Lemma}
\newtheorem{claim}[theorem]{Claim}
\newtheorem{proposition}[theorem]{Proposition}
\newtheorem{definition}[theorem]{Definition}
\newtheorem{problem}[theorem]{Problem}
\newtheorem{example}[theorem]{Example}
\DeclareMathOperator*{\E}{\mathbb{E}}
\renewcommand{\ell}{L}
\title{Stabilizer rank and higher-order Fourier analysis}
\begin{document}

	\author{Farrokh Labib}
	\affiliation{CWI, QuSoft, Science Park 123, 1098 XG Amsterdam, Netherlands. Supported by the Gravitation-grant NETWORKS-024.002.003 from the Dutch Research Council~(NWO).}
	\email{farrokhlabib@gmail.com}
	
		\maketitle
\begin{abstract}
		We establish a link between stabilizer states, stabilizer rank and higher-order Fourier analysis---a still-developing area of mathematics that grew out of Gowers's celebrated Fourier-analytic proof of Szemer\'edi's theorem \cite{gowers1998new}. We observe that $n$-qudit stabilizer states are so-called \emph{nonclassical quadratic phase functions} (defined on affine subspaces of $\F_p^n$ where $p$ is the dimension of the qudit) which are fundamental objects in higher-order Fourier analysis. This allows us to import tools from this theory to analyze the stabilizer rank of quantum states. Quite recently, in \cite{peleg2021lower} it was shown that the $n$-qubit magic state has stabilizer rank $\Omega(n)$. Here we show that the qudit analogue of the $n$-qubit magic state has stabilizer rank $\Omega(n)$, generalizing their result to qudits of any prime dimension. Our proof techniques use explicitly tools from higher-order Fourier analysis.		
		We believe this example motivates the further exploration of applications of higher-order Fourier analysis in quantum information theory.

\end{abstract}

\section{Introduction}

%\cite{aaronson2004improved,anders2006fast,bravyi2016trading,bravyi2019simulation} 

The Gottesman-Knill Theorem \cite{gottesman1998heisenberg, nielsen2002quantum} states that any quantum circuit consisting of Clifford gates can be efficiently classically simulated. The Clifford group on $n$ qubits is generated by the Hadamard gate $H$, the $\pi/4$ phase gate $S$ and the entangling $\mathrm{CNOT}$ gate. In particular, this means that circuits consisting only of Clifford gates cannot provide computational advantage over classical computers. We can promote such circuits to universal quantum computers by having access to a non-Clifford gate or (equivalently) a ``magic state'' \cite{bravyi2005universal}. It is widely believed that universal quantum computers cannot be efficiently simulated by classical computers: state-of-the-art simulators using modern day supercomputers are only able to simulate a few dozens of qubits \cite{chen2018classical,haner20175,pednault2017breaking,smelyanskiy2016qhipster}. So it has to be this magic state that fuels the computational hardness of simulation by classical computers. It is therefore important to understand how much this resource costs in terms of free (efficiently simulatable) resources.  These costs are quantified by  ``measures of magic'' \cite{liu2020many} an example of which is \emph{stabilizer rank}, first introduced in \cite{bravyi2016trading}. Here, the free resources are states obtained from the canonical all-zero state by applying only Clifford operations, which are the well-known stabilizer states. To increase our understanding of non-stabilizerness, or the amount of ``magic'' a quantum state has, a valid approach might be to find a different characterization of these objects. This might introduce new techniques in analyzing measures of magic. It is well known that stabilizer states are quadratic forms defined on affine subspaces \cite{dehaene2003clifford,hostens2005stabilizer, gross2006hudson}. Here we observe that these objects are so-called \emph{nonclassical quadratic phase functions} defined on affine subspaces which are well-studied objects in higher-order Fourier analysis.\\
\\
Higher-order Fourier analysis, a still nascent area of mathematics, grew out of a Fourier-analytic proof of Szemer\'edi's Theorem \cite{szemeredi1975sets} by Gowers \cite{gowers1998new}.
% Szemer\'edi's Theorem, a central result in additive combinatorics, states that any dense subset of $[N]$, containing say a 0.1 fraction of all the elements in $[N]$, contains a $k$-term arithmetic progression\footnote{A $k$-term arithmetic progression is a sequence of the form $x,x+y,x+2y,\dots, x+(k-1)y$ for some integers $x$ and non-zero $y$.} for any $k\geq 2$ provided $N$ is large enough. For $k=3$, this is Roth's Theorem \cite{roth1953certain}, which was proven using (ordinary) Fourier analysis.
Whereas in Fourier analysis one studies how functions correlate with characters, in higher-order Fourier analysis one studies correlations with functions that resemble polynomials (where characters correspond to linear functions). These polynomial-like functions are known as \emph{polynomial phase functions}.

It turns out that Boolean functions giving the (unnormalised) amplitudes of graph states are examples of quadratic phase functions. In general, quadratic phase functions can be defined using ``multiplicative derivatives'': for $h\in \F_2^n$, the multiplicative derivative of $f\colon \F_2^n\to\C$ in direction $h$ is
\begin{align*}
	\dotDelta_hf(x):=f(x+h)\overline{f(x)}.
\end{align*} 
\emph{Nonclassical} polynomial phase functions of degree $d$ are those functions that are constant after taking $d+1$ multiplicative derivatives. It is not difficult to check that graph states, whose amplitude function always has the form $f(x)=(-1)^{q(x)}$ where $q$ is a quadratic polynomial, satisfies this property with $d=2$. These are referred to as the \emph{classical} quadratic phase functions. However, the nonclassical quadratic phase functions are not exhausted by these examples. It turns out that stabilizer states correspond to functions in this broader class. This establishes a surprising link between higher-order Fourier analysis and quantum information theory. It was shown in \cite{dehaene2003clifford} (see also \cite{bravyi2016improved}) that stabilizer states are \emph{quadratic forms} taking values in~$\Z_8$ defined on affine subspaces. We will see that they are nonclassical quadratic phase function on affine subspaces.

Let $p$ be an odd prime and consider qudits of dimension $p$. Then, the amplitudes of $n$-qudit stabilizer states are also quadratic phase functions defined on affine subspaces of $\F_p^n$ \cite{hostens2005stabilizer}, see also \cite{gross2006hudson}. It is interesting to note that for primes $p>2$ the $n$-qudit stabilizer states are given by classical quadratic polynomials while there are no nonclassical polynomials of degree two, contrary to the case $p=2$.
%
% Following \cite{howard2012qudit}, we define the $n$-qudit magic state to be $U\ket{+}$ where $U$ is the qudit version of the $T$ gate for qubits and $\ket{+}=\frac{1}{\sqrt{p}}(\ket{0}+\dots +\ket{p-1})$. We show that for any prime $p>2$, the $n$-qudit magic state has stabilizer

\paragraph{Stabilizer rank.} \emph{Stabilizer rank} is a measure of magic which was recently extensively analysed by Bravyi et al.\ \cite{bravyi2019simulation}. The stabilizer rank of a quantum state $\ket{\psi}$, denoted $\chi(\ket{\psi})$, is the minimal number $r$ such that $\ket{\psi}$ can be written as a linear combination of $r$ stabilizer states. As is well known, any circuit $\mathcal{C}$ consisting of Clifford gates and $n$ copies of the $T$-gate, given by $\ket{0}\!\bra{0}+e^{i\pi/4}\ket{1}\!\bra{1}$, can be implemented using Clifford operations on the $n$-qubit magic state~$\ket{T}^{\otimes n}$, where $\ket{T}=\frac{\ket{0}+e^{i\pi/4}\ket{1}}{\sqrt{2}}$. Then, the stabilizer rank of $\ket{T}^{\otimes n}$ upper bounds the simulation cost of the circuit~$\mathcal{C}$.  Bravyi, Smith and Smolin \cite{bravyi2016trading} showed that the stabilizer rank of the $n$-qubit magic state is $\Omega(\sqrt{n})$. 
%Here we look at a slightly weaker notion of rank. Namely, we want to know the minimal number that is needed to write $\ket{\psi}$ as a linear combination of \emph{graph states}. Denote this number by $\chi'(\ket{\psi})$. The lower bound from \cite{bravyi2019simulation} implies trivially $\chi'(\ket{T}^{\otimes n})=\Omega(\sqrt{n})$.
Very recently Peleg, Shpilka and Volk \cite{peleg2021lower} showed a lower bound of $\Omega(n)$ for the stabilizer rank of the $n$-qubit magic state, which is a quadratic improvement.
Here we give the same lower bound but generalize to qudits of any prime dimension. 

Adding any non-Clifford gate to the Clifford gate set could in principle promote it to universal quantum computation. However, we use the generalization of the $T$ gate for qudits from \cite{howard2012qudit}. Let us call this gate $U$ and define $\ket{+}: = \frac{\ket{0}+\ket{1}+\dots +\ket{p-1}}{\sqrt{p}}$. Then, the single qudit magic state over $\F_p$ is defined to be
\begin{align*}
	\ket{\psi_U} = U\ket{+}.
\end{align*}
Our main result is the following.
\begin{theorem}\label{thm:main}
	Let $p$ be any prime and let $\ket{\psi_U}^{\otimes n}$ be the $n$-qudit magic state over $\F_p$. We have that $\chi(\ket{\psi_U}^{\otimes n})\geq \Omega(n)$.
\end{theorem}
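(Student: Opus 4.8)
The plan is to pass from states to their amplitude functions and then argue, via higher-order Fourier analysis, that the magic amplitude has ``degree'' too large to be assembled from few stabilizer amplitudes. Writing $\ket{\psi_U}=\sum_{t\in\F_p}\gamma(t)\ket t$, the state $\ket{\psi_U}^{\otimes n}$ has amplitude function $g(x)=\prod_{i=1}^{n}\gamma(x_i)$ on $\F_p^n$, and from the explicit form of Howard's qudit magic gate \cite{howard2012qudit} one checks that $g$ is a \emph{separable} polynomial phase function of degree $d>2$ (a classical cubic when $p\ge 5$, and a nonclassical phase function of small degree $>2$ when $p\in\{2,3\}$). Dually, by \cite{dehaene2003clifford,hostens2005stabilizer,gross2006hudson} every $n$-qudit stabilizer state has amplitude function of the form $\mathbf 1_A\cdot q$ with $A\subseteq\F_p^n$ affine and $q$ a (nonclassical) quadratic phase function on $A$. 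Hence a hypothetical decomposition $\ket{\psi_U}^{\otimes n}=\sum_{j=1}^{r}c_j\ket{\phi_j}$ becomes a function identity $g=\sum_{j=1}^{r}d_j\,\mathbf 1_{A_j}q_j$ (normalizations absorbed into the $d_j$), and \Cref{thm:main} reduces to showing that a separable degree-$d$ phase function with $d>2$ is not a linear combination of $o(n)$ quadratic phase functions supported on affine subspaces.

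I would follow the strategy of Peleg--Shpilka--Volk \cite{peleg2021lower}, organised around multiplicative derivatives and two coefficient-free estimates. (i) \emph{Covering}: $g$ vanishes nowhere, so the $A_j$ must cover $\F_p^n$, giving $\sum_j p^{-\operatorname{codim}A_j}\ge 1$; in particular some $A_{j_0}$ has codimension $O(\log r)$. (ii) \emph{Rank}: two multiplicative derivatives annihilate the $x$-dependence of a quadratic phase on its affine support, so the matrix $\big[\dotDelta_{h_2}\dotDelta_{h_1}(\mathbf 1_A q)(x)\big]_{x;\,(h_1,h_2)}=\big[\mathbf 1[x\in A,\ h_1,h_2\in A-A]\cdot\beta_q(h_1,h_2)\big]$ has rank at most one (all nonzero columns are multiples of $\mathbf 1_A$), whereas because $g$ is separable and of degree $>2$ the analogous matrix for $g$ is the $n$-fold tensor power of a fixed matrix of rank $\ge 2$ and so has rank $\ge 2^n$. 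The idea is then to iterate: locate a large piece $A_{j_0}$ (codimension $O(\log r)$), restrict the identity to $A_{j_0}$ — where $g$ stays a separable degree-$d$ phase on a space of dimension $n-O(\log r)$ and $\phi_{j_0}$ becomes a \emph{full} quadratic phase $q_{j_0}$ — multiply through by $\overline{q_{j_0}}$ so that the $j_0$-th piece becomes a constant, move that constant to the other side, and recurse on the resulting object ``separable degree-$d$ phase minus $O(1)$ stabilizer amplitudes'', which now involves only $r-1$ of the $A_j$. Each round spends $O(\log r)$ dimensions; running out of pieces while the ambient dimension is still $\Omega(n)$ would contradict that the surviving object is genuinely of degree $>2$, so $r\cdot O(\log r)=\Omega(n)$, i.e.\ $r=\Omega(n/\log n)$; charging the codimensions more carefully (their total over all rounds being governed by $\sum_j\operatorname{codim}A_j$, which (i) and (ii) keep bounded) then removes the logarithmic loss and gives $r=\Omega(n)$.

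The main obstacle, exactly as in \cite{peleg2021lower}, and the reason this yields only a linear (not exponential) bound, is the complete lack of a priori control on the coefficients $d_j$ and on the codimensions of the $A_j$: a high-codimension $A_j$ is supported on an exponentially small set but may carry a gigantic coefficient, and such pieces are invisible to $\ell_2$- and Gowers-norm inequalities (they have tiny Gowers norm themselves), so one cannot simply estimate $\|g\|_{U^3}$. Concretely, the rank estimate (ii) is not directly applicable because $\dotDelta_{h_2}\dotDelta_{h_1}$ is not linear: expanding $\dotDelta_{h_2}\dotDelta_{h_1}\big(\sum_j d_j\mathbf 1_{A_j}q_j\big)$ produces $r^4$ terms, of which only the ``diagonal'' ones have rank at most one while the $r^4-r$ cross terms can have full rank. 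Taming these cross terms — by passing through the affine-subspace geometry and the covering bound rather than through a global norm, and by checking that restriction to $A_{j_0}$ and multiplication by $\overline{q_{j_0}}$ quantitatively preserve the separable, high-rank structure of $g$ and that this structure is robust under subtracting $O(1)$ stabilizer amplitudes — is where the higher-order Fourier-analytic content does the real work: the classification of (nonclassical) quadratic phase functions on affine subspaces, the identification of $g$'s iterated derivative with a diagonal (hence maximal-rank) multilinear form, and the stability of that rank under restriction to subspaces of small codimension.
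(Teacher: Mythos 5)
Your proposal takes a genuinely different route from the paper's. You sketch an iterative restriction argument in the style of Peleg--Shpilka--Volk: locate a large piece $A_{j_0}$ via a covering bound, restrict, multiply through by $\overline{q_{j_0}}$, recurse, with a derivative-matrix rank estimate as the engine. The paper instead borrows a single lemma from PSV---the pigeonhole Claim~\ref{lem:large_affine_subspace}, which for $r\leq n/2$ affine subspaces produces one affine $U$ of dimension $\geq n-2r$ on which \emph{every} indicator $1_{H_i}$ is constant---and otherwise argues in one shot through the higher-order-Fourier notion of the rank of a nonclassical polynomial (Definition~\ref{def:rank_poly}). Restricting the decomposition to $U$ at once collapses it to $e(P'(y))=\sum_{i\in S}c_i'e(Q_i'(y))$ with \emph{full-space} quadratics $Q_i'$, so $\mathrm{frank}_3(e(P'))\leq r$. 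The paper then shows $\mathrm{rank}_3(P')=\Omega(n)$: the correlation estimate Proposition~\ref{prop:correlation_Tn_with_quadratic} (a van der Corput / derivative computation in the spirit of your step (ii), but applied only to $P$ against a \emph{single} quadratic) combined with the inverse-type Proposition~\ref{prop:rank_implies_uniform} gives $\mathrm{rank}(P)=\Omega(n)$, and Lemma~\ref{lem:restrict_pol_subspace]} shows this survives restriction to $U$. The elementary inequality $\mathrm{frank}_d(e(P))\geq\mathrm{rank}_d(P)$ (Lemma~\ref{lemma:rank->frank}) then closes the loop. This route never applies $\dotDelta$ to the full sum, so the cross-term nonlinearity you flag simply never arises.

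Two concrete gaps remain in your sketch even as a sketch. First, your covering step $\sum_j p^{-\operatorname{codim}A_j}\geq1$ locates only one large $A_{j_0}$ at a time and forces iteration; the bookkeeping at the end (``charging the codimensions more carefully removes the logarithmic loss'') is asserted, not argued, and without the simultaneous pigeonhole of Claim~\ref{lem:large_affine_subspace} it is not clear the dimension loss across rounds is controlled by $\sum_j\operatorname{codim}A_j$. As written you would land at $r=\Omega(n/\log n)$, not $\Omega(n)$. Second, the rank estimate (ii) is established only for a single stabilizer amplitude, and you correctly note it breaks on the sum because $\dotDelta$ is not linear; you then defer that as ``the real work.'' The paper's $\mathrm{rank}$/$\mathrm{frank}$ machinery is precisely the linear-decomposition-friendly replacement for that step, and is where its argument diverges essentially from both your proposal and from PSV.
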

This result generalizes \cite{peleg2021lower}, but our techniques are completely different and use  explicitly tools from higher-order Fourier analysis. Roughly speaking, we show that the function giving the amplitudes of the $n$-qudit magic state is a cubic nonclassical polynomial phase function for which the polynomial ``in the phase'' has high rank (see Definition \ref{def:rank_poly}). We then prove that the lower bound for this rank is also a lower bound for the stabilizer rank. In this step we use a lemma from \cite{peleg2021lower} (Claim~3.3 in that paper) to get a handle on the affine subspaces that appear from the stabilizer states. Apart from this lemma, the techniques are different.

We think that the techniques used here might pave the way to super-linear lower bounds for decompositions in terms of stabilizer states defined on the full space $\F_p^n$.

\paragraph{Higher-order Fourier analysis and quantum information theory.} 
We believe that the application we found motivate the further study of tools in higher-order Fourier analysis in quantum information theory. A work that connected to this area is \cite{bannink2019bounding}, where the ``Gowers Inverse Theorem'' is utilised to analyze certain Bell inequalities.
This current work is yet another example.

Higher-order Fourier analysis has already found many applications in classical theoretical computer science, such as in property testing, coding theory and complexity theory \cite{hatami2019higher}. In analogy with analysis of Boolean functions, we hope that higher-order Fourier analysis proves equally useful in quantum information theory.

\paragraph{Acknowledgements} I would like to thank Jop Bri\"et for his guidance and many fruitful discussions in this project. Also thanks to Michael Walter for the fruitful discussions on stabilizer and magic states in odd prime dimension.

\section{Preliminaries}
For a finite non-empty set $S$, we write $\E_{s\in S}$ to mean $\frac{1}{|S|}\sum_{s\in S}$. Let $p$ be a prime. We denote by $\F_p$ the field of $p$ elements, $\T=\R/\Z$ and $\Z_n$ is the ring of integers mod $n$.

Let $|\cdot|\colon \F_p\to \{0,1,\dots, p-1\}$ be the natural map. This map has the property that for~$a,b\in \F_2$,
\begin{align}\label{eq:property_||}
	|a+b|=|a|+|b|-2|ab|.
\end{align}
And for $a,b\in \F_3$,
\begin{align}\label{eq:property_||_3}
	|a+b|=|a|+|b|+3(|a^2b|+|ab^2|)-6|ab|\mod 9.
\end{align}
We also define $|\cdot|\colon \F_p^n\to\Z_{\geq 0}:x\mapsto |x_1|+\dots +|x_n|$, this is an abuse of notation, but it should be clear from context what the domain of the map $|\cdot|$ is.
Define $\iota$ to be the map given by
\begin{align*}
	\iota \colon \F_p\to \T\colon  x\mapsto |x|/p\mod 1.
\end{align*}
The exponential map $e\colon \T\to \C$ is defined $e(t):=e^{2\pi i t}$.

For an affine subspace $H\subset \F_p^n$, we write $H=L(H)+v$ for some vector $v\in \F_p^n$ and a linear subspace $L(H)$. The vector $v$ is not unique, but $L(H)$ is uniquely determined by $H$, namely~$L(H)=\{x-y\colon x,y\in H\}$. 

Let $G$ be an abelian group, $H\subset \F_p^n$ an affine subspace and $P\colon H\to G$ a map. The additive derivative $\Delta_h$ is defined to be $\Delta_hP(x):=P(x+h)-P(x)$, where $h\in L(H)$ and $x\in H$.

For a subset $A\subset \F_p^n$, write $1_A\colon \F_p^n\to\{0,1\}$ for the indicator function of $A$, i.e. $1_A(x)=1$ if $x\in A$ and 0 otherwise.

We also need some basic definitions for the Fourier transform. Let $\omega_p:=e^{2\pi i/p}$. For functions~$f,g\colon \F_p^n\to \C$, their inner product is defined by
\begin{align*}
	\langle f,g\rangle :=\E_{x\in\F_p^n}f(x)\overline{g(x)}=p^{-n}\sum_{x\in \F_p^n}f(x)\overline{g(x)}.
\end{align*}
Denote by $\hat{f}$ the Fourier transform of $f$, i.e.
\begin{align*}
	\hat{f}(\alpha) = \E_{x\in\F_p^n} f(x)\omega_p^{\langle \alpha, x\rangle},\quad \alpha \in\F_p^n.
\end{align*}
The inner product for the Fourier transforms is defined by
\begin{align*}
	\langle \hat{f},\hat{g}\rangle =\sum_{\alpha\in\F_p^n}\hat{f}(\alpha)\overline{\hat{g}(\alpha)}.
\end{align*}
Using this definition, Parseval's theorem becomes
\begin{align*}
	\langle f,g \rangle = \langle \hat{f},\hat{g}\rangle.
\end{align*}

\section{Techniques}
In this section, we introduce all the definitions necessary for the proof of our main result. Our main result generalizes the recent result of \cite{peleg2021lower}, but we use completely different techniques that we explain here as well.
\subsection{Stabilizer states}
In \cite{bravyi2016improved}, a succinct representation of an $n$-qubit stabilizer states is given in terms of quadratic forms on affine subspaces of $\F_2^n$. For this, they introduced the following definition.
% Recall that for an affine subspace $H\subset \F_2^n$, we write $H=L(H)+v$ for some vector $v\in \F_2^n$ and a linear subspace $L(H)$.

\begin{definition}\label{def:quadratic-form}
	For an affine subspace $H\subset \F_2^n$, a map $Q\colon H\to \frac{1}{8}\Z/\Z$ is called a quadratic form if $\Delta_{h_1}\Delta_{h_2}Q(x)$ is independent of $x\in H$ for all $h_1,h_2\in L(H)$.
\end{definition}
\begin{theorem}[\cite{bravyi2016improved}]\label{thm:bravyi2016}
	For any $n$-qubit stabilizer state $\ket{\phi}$, there exists a unique affine subspace~$H\subset \F_2^n$ and a quadratic form $Q\colon H\to \frac{1}{8}\Z/\Z$ such that
	\begin{align}\label{stab_states_quadraticform}
		\ket{\phi} =2^{-\dim{H}/2} \sum_{x\in H}e(Q(x))\ket{x}.
	\end{align}
\end{theorem}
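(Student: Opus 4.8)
The plan is to establish the characterization in two directions: first, that every stabilizer state admits a representation of the form \eqref{stab_states_quadraticform}, and second, that the affine subspace $H$ and the quadratic form $Q$ are uniquely determined by $\ket{\phi}$. For the existence direction, I would start from the standard fact that an $n$-qubit stabilizer state $\ket{\phi}$ is the unique (up to global phase) common $+1$-eigenstate of a maximal abelian subgroup $\mathcal{S}$ of the Pauli group. Writing a generating set of $\mathcal{S}$ in the symplectic formalism (each generator as a pair of $\F_2^n$ vectors together with a phase), one obtains that the support of $\ket{\phi}$ in the computational basis is an affine subspace $H = L(H) + v$: the $Z$-type constraints cut out $H$, and $\dim H$ equals $n$ minus the number of independent $Z$-type generators. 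Then the $X$-type and mixed generators act as translations on $H$ combined with sign/phase factors, and tracking these phases through the group relations (using that $\mathcal{S}$ is abelian, so the commutation phases vanish) forces the amplitude function $x \mapsto e(Q(x))$ on $H$ to have the property that its second additive derivative $\Delta_{h_1}\Delta_{h_2} Q(x)$ is independent of $x$. Concretely, applying the stabilizer generator associated with a direction $h \in L(H)$ relates $Q(x+h)$ to $Q(x)$ plus an affine-linear term in $x$, and composing two such relations shows $\Delta_{h_1}\Delta_{h_2}Q$ is a constant in $\frac{1}{8}\Z/\Z$; this is exactly Definition~\ref{def:quadratic-form}. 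The normalization $2^{-\dim H/2}$ is forced by $\langle \phi | \phi \rangle = 1$.

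For uniqueness, suppose $\ket{\phi}$ is given; then $H$ is recovered as the support of $\ket{\phi}$ in the computational basis, so it is manifestly unique. Given $H$, the coefficients $e(Q(x))$ are just the (phased, normalized) computational-basis amplitudes, so $Q(x) \bmod 1$ is determined for every $x \in H$ once we fix the representative $e(Q(x))$ — i.e.\ $Q$ is unique as a map $H \to \frac{1}{8}\Z/\Z$. The only subtlety is the global phase: stabilizer states are typically defined up to phase, so one should either fix the convention that $\ket{\phi}$ is a genuine vector (not a ray) and check the amplitudes lie in $e(\frac{1}{8}\Z/\Z)$, or state uniqueness of $Q$ up to an additive constant. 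I would follow \cite{bravyi2016improved} and adopt the former.

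The main obstacle is the phase bookkeeping in the existence direction: showing that the amplitudes are always eighth roots of unity (rather than arbitrary phases) and that the resulting $Q$ genuinely satisfies the vanishing-third-derivative condition. This is where the abelian structure of $\mathcal{S}$ must be used carefully — the factors of $i$ coming from $Y = iXZ$ and from reordering Pauli products are precisely what produce the denominator $8$, and one must verify these are consistent (well-defined independent of the chosen generating set and of the order of multiplication). Since this is exactly the content of the cited result, I would either reproduce the computation from \cite{dehaene2003clifford, hostens2005stabilizer} in the symplectic/$\Z_8$ formalism or simply invoke it; the new contribution of the present paper is not this theorem itself but the reinterpretation of $Q$ as a nonclassical quadratic phase function, which is taken up after the statement.
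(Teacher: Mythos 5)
The paper does not prove this theorem: it is stated with a citation to \cite{bravyi2016improved}, and your proposal correctly anticipates this — you conclude you would ``simply invoke it.'' Given that, your sketch is a faithful high-level outline of the standard derivation found in the cited works. You correctly identify the two structural steps: the support of $\ket{\phi}$ is cut out as an affine subspace $H$ by the $Z$-type stabilizer constraints, and stabilizer elements with nontrivial $X$-part act as translations on $H$ accompanied by phase factors, which imposes a recursion on the amplitudes whose second mixed difference $\Delta_{h_1}\Delta_{h_2}Q$ must be constant by commutativity of the stabilizer group. You also correctly flag the global-phase convention needed for $Q$ itself (rather than $Q$ modulo an additive constant) to be unique.

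Two small remarks. First, the phase bookkeeping you defer is where the actual content lies: one must track the factors of $i$ arising from Pauli reordering and from $Y=iXZ$ to see that $Q$ lands in the discrete group $\frac{1}{8}\Z/\Z$ rather than in $\T$ generically, and this computation is precisely what \cite{dehaene2003clifford,bravyi2016improved} carry out. Second, the paper observes in the paragraph immediately following the theorem that the quadratic-form condition of Definition~\ref{def:quadratic-form} in fact forces $Q$ to take values in the proper subgroup $\frac{1}{4}\Z/\Z\subset\frac{1}{8}\Z/\Z$; your sketch speaks of ``eighth roots of unity,'' which is consistent with the stated codomain but misses this refinement. It is exactly that refinement — re-read through Proposition~\ref{prop:global_nonclassical_poly} — which yields the identification with nonclassical quadratic polynomials of depth at most one, and, as you rightly observe, that reinterpretation rather than Theorem~\ref{thm:bravyi2016} itself is the paper's novel contribution at this point.
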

In \cite{bravyi2016improved} quadratic forms are considered that are maps $Q\colon H\to\frac{1}{8}\Z/\Z$. We will show that the only way that such a map has the property $\Delta_{h_1}\Delta_{h_2}Q(x)$ being independent of $x\in H$ for all $h_1,h_2\in L(H)$, is if $Q$ actually take values in $\frac{1}{4}\Z/\Z\subset\frac{1}{8}\Z/\Z$.  
We will also see that such functions are known as \emph{nonclassical} polynomials of degree two in the literature of higher-order Fourier analysis. Making this  explicit link with higher-order Fourier analysis allows us to import tools from that theory to use in lower bounding the stabilizer rank of quantum states.

Next, qudit stabilizer states where the dimension of the qudit is an odd prime $p$ are somewhat simpler as they are given by quadratic polynomials taking values in $\F_p$ on affine subspaces in~$\F_p^n$. Let $\omega_p= e^{2\pi i/p}$ be a $p$-th root of unity. 
A map $Q\colon H\to\F_p$ is a quadratic polynomial on an affine subspace~$H\subset \F_p^n$ if $\Delta_{h_1}\Delta_{h_2}Q(x)$ is independent of $x\in H$ for all $h_1,h_2\in L(H)$.
\begin{theorem}[\cite{hostens2005stabilizer}, see also \cite{gross2006hudson}]\label{thm:stabstates_quadraticPolys}
	Let $p$ be an odd prime and $\ket{\phi}$ an $n$-qudit stabilizer state where the dimension of the qudit is $p$. Then, there is an affine subspace $H\subset \F_p^n$ and a quadratic polynomial $Q\colon H\to \F_p$ such that 
	\begin{equation}
		\ket{\phi} = p^{-\dim(H)/2}\sum_{x\in H}\omega_p^{Q(x)}\ket{x}.
	\end{equation}
\end{theorem}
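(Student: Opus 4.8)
The plan is to argue directly from the definition of an $n$-qudit stabilizer state as the (unique, up to a scalar) common $+1$-eigenvector of an abelian subgroup $S$ of the qudit Pauli group with $|S|=p^n$ and $S$ meeting the scalars trivially. Fix single-qudit Weyl operators $X\colon\ket j\mapsto\ket{j+1}$ and $Z\colon\ket j\mapsto\omega_p^j\ket j$, and for $a,b\in\F_p^n$ set $W(a,b)=X^{a_1}Z^{b_1}\otimes\cdots\otimes X^{a_n}Z^{b_n}$, so that $W(a,b)\ket x=\omega_p^{\langle b,x\rangle}\ket{x+a}$. Since $p$ is odd, every Pauli group element has the form $\omega_p^{\gamma}W(a,b)$ with $\gamma\in\F_p$: no $\sqrt{-1}$ appears (unlike the qubit case), and this is exactly what will force $Q$ to be an honest $\F_p$-valued, i.e.\ classical, quadratic polynomial rather than a $\frac{1}{8}\Z/\Z$-valued form as in Theorem~\ref{thm:bravyi2016}. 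Hence $S$ is generated by such elements, and its image $S_0\subseteq\F_p^{2n}$ under $\omega_p^{\gamma}W(a,b)\mapsto(a,b)$ is an $n$-dimensional isotropic subspace for the standard symplectic form (abelianness of $S$ being equivalent to isotropy of $S_0$).

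First I would pin down the support of $\ket\phi$. Let $S_Z\le S$ consist of the diagonal elements $\omega_p^{\gamma(b)}W(0,b)$; since $S_Z$ is a group, $b\mapsto\gamma(b)$ is linear on the underlying subspace $V\subseteq\F_p^n$, and being a $+1$-eigenvector of $\omega_p^{\gamma(b)}W(0,b)$ forces $\langle b,x\rangle=-\gamma(b)$ for every $b\in V$ and every $x$ in the support. Thus the support lies in the affine subspace $H:=\{x:\langle b,x\rangle=-\gamma(b)\ \forall\,b\in V\}$, with $L(H)=V^{\perp}$. Evaluating the rank-one projector $\ketbra{\phi}{\phi}=p^{-n}\sum_{g\in S}g$ on $\ket x$ gives $|\braket{x}{\phi}|^2=p^{-n}\sum_{b:\,\omega_p^{\gamma(b)}W(0,b)\in S}\omega_p^{\gamma(b)+\langle b,x\rangle}$; the exponent is a linear functional of $b\in V$, identically zero iff $x\in H$, so this equals $p^{-\dim H}$ for $x\in H$ and $0$ otherwise. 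Hence $\ket\phi=p^{-\dim H/2}\sum_{x\in H}\psi(x)\ket x$ with $|\psi(x)|=1$; moreover $\dim H=n-\dim V$ equals the dimension of the image of $S_0$ under $(a,b)\mapsto a$, and isotropy of $S_0$ against $S_Z$ shows that image is exactly $V^{\perp}=L(H)$.

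Next I would extract the phase. Applying $g=\omega_p^{\gamma}W(a,b)\in S$ to $\ket\phi$ and matching coefficients gives, for all $x\in H$,
\begin{align*}
	\psi(x+a)=\omega_p^{\gamma+\langle b,x\rangle}\,\psi(x).
\end{align*}
As $g$ ranges over a generating set the vectors $a$ span $L(H)$, so fixing a base point $x_0\in H$ and absorbing $\psi(x_0)$ into the global phase, the recursion shows $\psi(x)$ is a $p$-th root of unity for every $x\in H$, say $\psi(x)=\omega_p^{Q(x)}$; consistency of the various ways of reaching $x$ from $x_0$ is automatic because all the identities $g\ket\phi=\ket\phi$ hold at once. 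The displayed relation then says $\Delta_aQ(x)=\gamma+\langle b,x\rangle$ is affine-linear in $x$ for every $a$ in a spanning set of $L(H)$; a short induction (using $\Delta_{h+h'}Q(x)=\Delta_{h'}Q(x+h)+\Delta_hQ(x)$, and that $\F_p$ has characteristic $p$) upgrades this to: $\Delta_hQ$ is affine-linear for every $h\in L(H)$. Therefore $\Delta_{h_1}\Delta_{h_2}Q(x)$ is independent of $x$ for all $h_1,h_2\in L(H)$, which is precisely the assertion that $Q$ is a quadratic polynomial on $H$.

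The only genuinely delicate point is the input used at the start: that for odd $p$ the Pauli group, hence all stabilizer phases, live in $\omega_p^{\F_p}$, which is what makes $Q$ take values in $\F_p$ (so there is no nonclassical degree-two phenomenon, consistent with the remark in the introduction). An alternative that bypasses the group bookkeeping is to use that the Clifford group acts transitively on stabilizer states and is generated by the Fourier gate, the quadratic phase gate, the SUM gate, and Paulis, and to verify that each generator maps a state of the form $p^{-\dim H/2}\sum_{x\in H}\omega_p^{Q(x)}\ket x$ to another such state; there the only nontrivial computation is the Fourier gate, where applying it produces a Gauss sum over $\F_p$ that, because $p$ is odd, evaluates in closed form to again a quadratic phase, now supported on the affine annihilator of $L(H)$.
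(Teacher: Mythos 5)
Your proof is correct. Note first that the paper does not supply its own proof here: Theorem~\ref{thm:stabstates_quadraticPolys} is stated as a citation to Hostens--Dehaene--De Moor and to Gross, so there is no ``paper's argument'' to compare yours against. What you have written is the standard derivation from the stabilizer formalism, and the details all check out. The chain of reasoning is sound: for odd $p$ every Pauli-group element has the form $\omega_p^{\gamma}W(a,b)$ with $\gamma\in\F_p$ and has order dividing $p$ (this is exactly where oddness enters, via $\binom{p}{2}\equiv 0\bmod p$), so the projector $p^{-n}\sum_{g\in S}g$ restricted to the diagonal gives a flat modulus-squared amplitude $p^{-\dim H}$ on the affine subspace $H$ cut out by $S_Z$ and zero elsewhere; the eigenvector condition then gives the recursion $\psi(x+a)=\omega_p^{\gamma+\langle b,x\rangle}\psi(x)$, which after fixing a base point forces $\psi$ to take values in $p$-th roots of unity and yields $\Delta_a Q$ affine-linear for the projected $a$'s, hence (by your induction, or simply because the $a$-projection of the isotropic subspace $S_0$ is all of $L(H)=V^{\perp}$) $\Delta_{h_1}\Delta_{h_2}Q$ constant for all $h_1,h_2\in L(H)$. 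This matches the paper's definition of a quadratic polynomial on $H$. Your alternative route via Clifford generators and the Gauss sum is also valid and is closer in spirit to the argument in Gross's paper. Two small points you could tighten: (i) spell out that the $a$-components already exhaust $L(H)$ (you prove this), so the induction on spanning sets is not strictly necessary; (ii) briefly note that $H\neq\emptyset$, which follows because the defining equations $\langle b,x\rangle=-\gamma(b)$ for $b$ in a basis of $V$ are linearly independent, or simply because $\ket{\phi}\neq 0$.
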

\begin{definition}\label{def:stab}
	For an $n$-qudit quantum state $\ket{\psi}$ define its stabilizer rank, denoted $\chi(\ket{\psi})$, to be the minimal number $r$ needed to write $\ket{\psi}$ as a linear combination of $r$ stabilizer states.
\end{definition}

\subsection{Nonclassical polynomials}
We will now define what \emph{nonclassical} polynomials are. 
%To this end, recall that the additive derivative operation $\Delta_h$ for any $h\in\F_2^n$ on functions $P\colon \F_2^n\to \T$ is defined to be
%\begin{align}
%	\Delta_hP(x)=P(x+h)-P(x),
%\end{align}
%which we will also call the \emph{derivative} in direction $h$. 
First, let $P\in\F_p[x_1,\dots,x_n]$ be a ``classical'' polynomial of degree $d\geq 1$. One can show that $\Delta_hP(x)=P(x+h)-P(x)$ is again a polynomial but of degree at most $d-1$. So after taking $d+1$ derivatives, the resulting polynomial will be the zero polynomial. Using this observation, we can define a broader class of polynomials as functions that take values in $\T$ and satisfy a condition on its derivatives.

\begin{definition}\label{def:non-classical_polynomial}
	For an integer $d\geq 1$, a map $P\colon \F_p^n\to \T$ is a nonclassical polynomial of degree at most $d$ if for all $h_1,\dots, h_{d+1}\in \F_p^n$ we have
	\begin{align}
		\Delta_{h_{d+1}}\cdots \Delta_{h_1}P(x)=0.
	\end{align}
	The degree of $P$ is the smallest such $d$. 
	%If $P$ takes values in $\frac{1}{2}\Z/\Z$, we call it a classical polynomial.
\end{definition}
Using the map $\iota\colon \F_p\to\T\colon x\to |x|/p$, one can view a polynomial $P\in\F_p[x_1,\dots,x_n]$ as a map~$\iota(P)\colon \F_p^n\to \T$. Nonclassical polynomials that arise in this way are called classical polynomials and they are a subset of the nonclassical polynomials. Note that they take values in~$\frac{1}{p}\Z/\Z$.
This confusing terminology has unfortunately become standard in the literature.
Below, all polynomials are assumed to be nonclassical if it is not explicitly stated. The following example shows that this containment of classical polynomials in the set of nonclassical polynomials is indeed proper.
\begin{example}\label{example:cubic_polynomials}
	\normalfont
	Let $p=2$ and consider $P\colon \F_2^n\to\T$ be given by $x\mapsto |x|/4$. This is a nonclassical polynomial of degree two. To show this, we take derivatives:
	\begin{align*}
		|x+h|/4-|x|/4 = |x|/4+|h|/4-|x\circ h|/2-|x|/4=|h|/4-|x\circ h|/2\mod 1,
	\end{align*}
	where $\circ$ is entry-wise product of vectors. In the first equality we used the property in (\ref{eq:property_||}). Taking one more derivative
	\begin{align*}
		\Delta_{h'}\Delta_hP(x)=-|x\circ h +h'\circ h|/2+|x\circ h|/2=-|h'\circ h|/2\mod 1.
	\end{align*}
	Indeed, $P(x)=|x|/4$ is a nonclassical polynomial of degree two. Note that it is \emph{not} a classical polynomial since it takes values in $\frac{1}{4}\Z/\Z$. \newline
	In general, the polynomial $P(x)=|x|/2^k$ is a nonclassical polynomial of degree $k$. We will see later (Section \ref{subsec:magic_states}) that for $k=3$, this polynomial corresponds to the $n$-qubit magic state.

	Let $p=3$ and consider the polynomial $P\colon\F_3^n\to\T\colon x\to |x|/9$. This is a nonclassical polynomial of degree three: let $h\in \F_3^n$, then by the property in (\ref{eq:property_||_3})
	\begin{align*}
		\Delta_hP(x)=|x+h|/9-|x|/9 = |h|/9+(|x^2\circ h|+|x\circ h^2|)/3-2|x\circ h|/3\mod 1.
	\end{align*}
	Here $x^2 = x\circ x$, i.e. the entry-wise product of $x$ with itself. The above is (up to a shift) a classical polynomial of degree two, hence $P$ is cubic. We will see in Section \ref{subsec:magic_states} that this polynomial $P$ corresponds to the $n$-qudit magic state where the qudit has dimension three. 
\end{example}

The above is a local definition of nonclassical polynomials. A global definition of classical polynomials of degree at most $d$ is that they take the form $\sum_{i_1+\dots +i_n\leq d}c_{i_1,\dots, i_n}x_1^{i_1}\cdots x_n^{i_n}$. Similarly, nonclassical polynomials have the following global description.

\begin{proposition}[\cite{tao2012inverse}]\label{prop:global_nonclassical_poly}
	$P\colon \F_p^n\to \T$ is a nonclassical polynomial of degree at most $d$ if and only if it has a representation of the form
	\begin{align}
		P(x_1,\dots,x_n)=\alpha+\sum_{\substack{ 0\leq i_1,\dots i_n\leq p-1;j\geq 0: \\ 0<i_1+\dots + i_n\leq d-j(p-1)}}\frac{c_{i_1,\dots,i_n,j}|x_1|^{i_1}\cdots |x_n|^{i_n}}{p^{j+1}},
	\end{align}
	for some coefficients $c_{i_1,\dots,i_n,j}\in \{0,1,\dots, p-1\}$ and $\alpha\in\T$, these coefficients are unique. The maximal $j$ in this decomposition is called the \emph{depth} of $P$.
\end{proposition}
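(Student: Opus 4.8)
The plan is to prove both implications together with the uniqueness clause. Write $\mathcal P_d$ for the group of nonclassical polynomials $\F_p^n\to\T$ of degree at most $d$, and $R_d$ for the set of maps admitting a representation of the displayed form; then ``if'' is $R_d\subseteq\mathcal P_d$, ``only if'' is $\mathcal P_d\subseteq R_d$, and the last clause is that the parametrisation of $R_d$ by $(\alpha,(c_{i_1,\dots,i_n,j}))$ is injective.

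For the ``if'' direction, since $\mathcal P_d$ is a group closed under multiplication by integers, it suffices to show each monomial $|x_1|^{i_1}\cdots|x_n|^{i_n}/p^{j+1}$ with $i_1+\dots+i_n+j(p-1)\le d$ is a nonclassical polynomial of degree at most $i_1+\dots+i_n+j(p-1)$. I would prove the more flexible claim that, for polynomials $P_1,\dots,P_m\in\F_p[x_1,\dots,x_n]$ that each depend on a single variable, the map $x\mapsto|P_1(x)|\cdots|P_m(x)|/p^{j+1}$ is a nonclassical polynomial of degree at most $\deg P_1+\dots+\deg P_m+j(p-1)$ (the monomials being the case where the $P_i$ are coordinates), by a double induction: outer on the depth $j$, inner on $\deg P_1+\dots+\deg P_m$. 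The base cases $j=0$ (where the map equals $\iota(P_1\cdots P_m)$, a genuine polynomial) and all $\deg P_i=0$ are immediate. For the step one takes an additive derivative $\Delta_h$ and expands $\prod_i|P_i(x+h)|$ via the carry identity $|a+b|=|a|+|b|-p\,c$, $c=c(a,b)\in\{0,1\}$, which generalises (\ref{eq:property_||}) and (\ref{eq:property_||_3}); writing $|P_i(x+h)|=|P_i(x)|+|(\Delta_hP_i)(x)|-p\,c_i(x)$, the crucial point is that each carry $c_i$ depends on a single variable and, being $\{0,1\}$-valued, equals $|E_i|$ for a univariate polynomial $E_i$ of degree at most $p-1$. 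Expanding the product, the ``main'' term cancels in $\Delta_h$, the terms with no carry factor have strictly smaller total degree (inner induction), and a term carrying $p^{|C|}$ either vanishes modulo $1$ (if $|C|\ge j+1$) or lives at depth $j-|C|<j$ (outer induction); the degree count then closes, using that a nonempty carry set contributes at least $1$ to the total degree.

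For uniqueness, suppose $\alpha+\sum_{I,j}c_{I,j}|x|^I/p^{j+1}\equiv0$; evaluating at $x=0$ kills every monomial and gives $\alpha=0$, and if $j^{*}$ is the largest depth present then multiplying by $p^{j^{*}}$ leaves $\sum_I c_{I,j^{*}}|x|^I/p\equiv0$, i.e.\ $\sum_I c_{I,j^{*}}x^I\equiv0$ over $\F_p$ (using $|x|^I\equiv x^I\bmod p$); linear independence of the reduced monomials $x^I$, $0\le i_k\le p-1$, as functions on $\F_p^n$ forces $c_{I,j^{*}}=0$, and one iterates downward on the depth. For the ``only if'' direction I would induct on $n$. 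The case $n=1$ is the heart: a map $P\colon\F_p\to\T$ has degree at most $d$ iff $\Delta_1^{d+1}P=0$ (any $\Delta_h$ is an integer combination of shifts of $\Delta_1$), iff the $p$-periodic lift $\bar P\colon\Z\to\T$ satisfies $\Delta_1^{d+1}\bar P=0$, iff $\bar P=\sum_{k=0}^{d}\theta_k\binom{x}{k}$ with $\theta_k\in\T$; unwinding the $p$-periodicity of $\bar P$ with Vandermonde's identity for $\binom{x+p}{k}$ and solving the resulting triangular system from the top forces each $\theta_k$ into $\frac1{p^{j+1}}\Z/\Z$ for a depth $j$ growing in steps of $p-1$ as $k$ decreases, and converting $\binom{|x|}{k}$ to the basis $|x|^i$, $0\le i\le p-1$, gives the claimed form. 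For $n>1$ one expands $P$ in the last coordinate by Newton's forward-difference formula, $P(x',x_n)=\sum_{m=0}^{p-1}\theta_m(x')\binom{|x_n|}{m}$ with $\theta_m(x')=(\Delta_{e_n}^mP)(x',0)$; each $\theta_m$ is a nonclassical polynomial of degree at most $d-m$ in $x'$ taking values in $\frac1{p^{j_m+1}}\Z/\Z$ with $j_m(p-1)\le d-m$ (apply the $n=1$ analysis on each line), so the induction hypothesis applies, and multiplying its representation by $\binom{|x_n|}{m}$ --- which for $m\le p-1$ is $1/m!$ times an integer polynomial of degree $m$ in $|x_n|$, with $m!$ invertible modulo $p^{j_m+1}$ --- produces monomials $|x'|^{I'}|x_n|^{i_n}/p^{j'+1}$ with $\sum_l i'_l+i_n+j'(p-1)\le(d-m)+m=d$, exactly the allowed ones.

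The step I expect to be the main obstacle is the degree/depth bookkeeping: in the ``if'' direction, making the mutual induction close (the carry terms must genuinely drop in depth, and the degree inequality must not be off by one --- it is not, because a nonempty carry contributes degree at least $1$); and in the ``only if'' direction, the passage from $n-1$ to $n$ variables, where one must track both the degree and the depth of the coefficient functions $\theta_m$ so that the reassembled monomials land precisely in the prescribed set. This is Lemma~1.7 of \cite{tao2012inverse}.
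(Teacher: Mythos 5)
The paper does not actually prove Proposition~\ref{prop:global_nonclassical_poly}; it is imported verbatim from Tao and Ziegler (your reference to Lemma~1.7 of \cite{tao2012inverse} is the right one), so there is no in-paper proof to compare against. That said, your outline tracks the standard argument from that source and from the Hatami--Hosseini--Lovett survey \cite{hatami2019higher} essentially faithfully: the ``if'' direction by expanding $\Delta_h$ of a monomial via the carry identity $|a+b|=|a|+|b|-p\,c(a,b)$, encoding the $\{0,1\}$-valued carries as $|E_i|$ for univariate polynomials $E_i$ of degree at most $p-1$, and closing a double induction on depth and total degree; uniqueness by evaluating at $0$ to kill $\alpha$ and peeling off the top depth modulo $p$ using linear independence of the reduced monomials $x^I$ on $\F_p^n$; and the ``only if'' direction by one-variable Newton expansion plus $p$-periodicity to constrain the Taylor coefficients into cyclic subgroups of $\T$, followed by an induction on $n$ via the forward-difference expansion $P(x',x_n)=\sum_{m=0}^{p-1}\theta_m(x')\binom{|x_n|}{m}$.

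The only place where I would press you to be more explicit is the $n\to n$ passage in the ``only if'' direction: you assert that the $\theta_m$ for $m\ge1$ take values in a cyclic group $\tfrac1{p^{j_m+1}}\Z/\Z$ with $j_m(p-1)\le d-m$, but this needs to be extracted \emph{before} invoking the inductive hypothesis, because without it the constant $\alpha_m$ in the representation of $\theta_m$ is an arbitrary element of $\T$ and $\alpha_m\binom{|x_n|}{m}$ would not land in the monomial basis. The fix is exactly what you gesture at: for $m\ge1$, $\alpha_m=\theta_m(0')=\Delta_{e_n}^m P(0',0)$ is a Newton coefficient of the univariate restriction $t\mapsto P(0',t)$, and the $n=1$ triangular-system analysis bounds its denominator. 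With that made explicit, together with the (straightforward but necessary) base-$p$ re-expansion to return coefficients to $\{0,\dots,p-1\}$ after multiplying by $(m!)^{-1}\bmod p^{j'+1}$, your degree/depth bookkeeping closes and the argument is sound. I also verified the ``if''-direction degree count: a term with carry set $\mathcal C$ and derivative set $\mathcal B$ has bound $D-\sum_{i\in\mathcal C}\deg P_i-|\mathcal B|$, and since constant factors never carry, any nonvanishing non-main term has $\sum_{i\in\mathcal C}\deg P_i+|\mathcal B|\ge1$, exactly as you say.
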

Note that the depth of a nonclassical polynomial is always at most $\lceil d/(p-1)\rceil -1$ since at least one of the indices $i_1,\dots,i_n$ must be positive.
Using this proposition, one quickly sees that the polynomial $x\mapsto |x|/4$ for $x\in\F_2^n$ from the example above has degree two and depth 1. The polynomial $x\mapsto |x|/2^{k}$ has degree $k$ and depth $k-1$.
\newline
\\
Let $H\subset \F_p^n$ be an affine subspace and $Q\colon H\to \T$ a map. Let $A\colon\F_p^{\dim(H)}\to\F_p^n$ be an affine linear transformation such that $A$ maps surjectively on $H$. We can view $Q$ in its ``local coordinate system'': for $y\in \F_p^{\dim(H)}$ define $Q'(y)=Q(A(y))$. 

Let $p=2$. If $Q$ is a quadratic form (Definition \ref{def:quadratic-form}), then $Q'$ is a nonclassical polynomial of degree at most two. Proposition \ref{prop:global_nonclassical_poly} implies that its depth is at most 1. In other words, $Q'$ takes values in $\frac{1}{4}\Z/\Z$ and so does $Q$.

Similarly, for $p>2$ a quadratic polynomial $Q\colon H\to\F_p$ on an affine subspace $H\subset \F_p^n$, is simply a quadratic polynomial in local coordinates, i.e. the map $Q'\colon \F_p^{\dim(H)}\to \F_p$ is a polynomial of degree two in the $\dim(H)$ variables.

%This observation together with Proposition \ref{prop:global_nonclassical_poly} implies the following.
%\begin{lemma}\label{lem:quadratic-form_on_affinespace}
%	Let $H\subset \F_2^n$ be an affine subspace and $Q\colon H\to \frac{1}{8}\Z/\Z$ a quadratic form. Then, $Q$ takes values in $\frac{1}{4}\Z/\Z\subset\frac{1}{8}\Z/\Z$.
%\end{lemma}
%\begin{proof}
%%	The condition that $\Delta_{h_1}\Delta_{h_2}Q(x)$ being independent of $x\in H$ for all $h_1,h_2\in L(H)$, translates to $\Delta_{h'_1}\Delta_{h'_2}Q'(y)$ being independent of $y\in\F_2^{\dim(H)}$ for all $h'_1,h'_2\in\F_2^{\dim(H)}$. 
%	%By Definition \ref{def:non-classical_polynomial}, $Q'$ is a nonclassical polynomial of degree 2. 
%%	The polynomial $Q'$, defined as above, is a nonclassical polynomial of degree at most 2. Proposition \ref{prop:global_nonclassical_poly} implies that its depth is at most 1. In other words, $Q'$ takes values in $\frac{1}{4}\Z/\Z$ and so does $Q$.
%\end{proof}

The following is a simple observation that we will need later, it follows directly from Proposition \ref{prop:global_nonclassical_poly}.
\begin{lemma}\label{lem:nonclassical_degree_1_are_classical}
	Let $P\colon\F_p^n\to\T$ be a nonclassical polynomial of degree at most one. Then $P$ is classical (up to some shift).
\end{lemma}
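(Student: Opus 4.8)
The plan is to invoke the global description of nonclassical polynomials from Proposition~\ref{prop:global_nonclassical_poly} with $d=1$ and simply read off what the representation forces. First I would write out what Proposition~\ref{prop:global_nonclassical_poly} gives for $P\colon\F_p^n\to\T$ of degree at most one: the sum ranges over indices $0\le i_1,\dots,i_n\le p-1$ and $j\ge 0$ subject to $0<i_1+\dots+i_n\le 1-j(p-1)$. Since the left-hand side of that inequality is a positive integer, we need $1-j(p-1)\ge 1$, i.e.\ $j(p-1)\le 0$, which forces $j=0$ (recall $p\ge 2$, so $p-1\ge 1$). Hence the depth is zero and every surviving term has $\sum_k i_k=1$, i.e.\ exactly one $i_k$ equals $1$ and the rest vanish. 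The representation therefore collapses to
\begin{align*}
	P(x_1,\dots,x_n)=\alpha+\sum_{k=1}^n \frac{c_k\,|x_k|}{p},
\end{align*}
for coefficients $c_k\in\{0,1,\dots,p-1\}$ and $\alpha\in\T$.

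Next I would identify this with a classical polynomial up to the shift $\alpha$. Recall that a classical polynomial $R\in\F_p[x_1,\dots,x_n]$ is viewed as a map $\F_p^n\to\T$ via $\iota(R)(x)=|R(x)|/p\bmod 1$, and that $\iota(x_k)(x)=|x_k|/p$. Taking $R(x)=\sum_{k=1}^n c_k x_k$, which is a genuine degree-one (or constant, if all $c_k=0$) polynomial over $\F_p$, one has $\iota(R)(x)=\big|\sum_k c_k x_k\big|/p$. This is not literally equal to $\sum_k c_k|x_k|/p$ as elements of $\{0,\dots,p-1\}$, but it is equal modulo $1$: indeed $\big|\sum_k c_k x_k\big|\equiv \sum_k c_k|x_k|\pmod p$ because $|\cdot|\colon\F_p\to\{0,\dots,p-1\}$ is a section of the reduction map and $c_k|x_k|$ reduces to $c_kx_k$ in $\F_p$. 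Dividing by $p$, the two expressions agree in $\T$. Therefore $P=\iota(R)+\alpha$, which is exactly the assertion that $P$ is classical up to a shift.

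The only mild subtlety — and the one place the statement's "up to some shift" earns its keep — is the constant term $\alpha$, which need not lie in $\frac1p\Z/\Z$ and hence cannot in general be absorbed into $\iota(R)$; this is why the lemma is phrased with the qualifier rather than as an outright equality $P=\iota(R)$. I would make sure the write-up states this cleanly. Everything else is a direct substitution into Proposition~\ref{prop:global_nonclassical_poly}, so there is no real obstacle; the main "work" is just the bookkeeping that the depth-zero, weight-one restriction of the index set is precisely the set of linear monomials, together with the elementary congruence $\big|\sum_k c_kx_k\big|\equiv\sum_k c_k|x_k|\pmod p$ that lets us pass between the $|\cdot|$-representation and an honest $\F_p$-linear form.
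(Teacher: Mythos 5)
Your proof is correct and takes exactly the approach the paper intends: the paper gives no written proof, only the remark that the lemma ``follows directly from Proposition~\ref{prop:global_nonclassical_poly},'' and your write-up is just the careful unwinding of that reduction (forcing $j=0$, reading off the linear monomials, and identifying $\sum_k c_k|x_k|/p$ with $\iota$ of an $\F_p$-linear form via the mod-$p$ congruence). Nothing is missing.
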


\subsection{Rank of nonclassical polynomials}
In this section we introduce a notion of rank for nonclassical polynomials. The main reference is again \cite{hatami2019higher}.
\begin{definition}\label{def:rank_poly}
	Let $P\colon \F_p^n\to\T$ be a nonclassical polynomial. For an integer $d\geq 1$, we define the $d$-rank, denoted by $\mathrm{rank}_d(P)$, to be the minimal number $r$ such that there are nonclassical polynomials $Q_1,\dots, Q_r$ all of degree at most $d-1$ and a function $\Gamma\colon \T^r\to \T$ such that for all $x\in \F_p^n$
	\begin{align*}
		\Gamma(Q_1(x),\dots, Q_r(x))=P(x).
	\end{align*}
	If $d=1$, the $d$-rank will be $\infty$ if $P$ is non-constant and 0 otherwise. The rank of  $P$, denoted $\mathrm{rank}(P)$, is the $\deg(P)$-rank of $P$.
\end{definition}
The next result \cite[Corollary 7.16]{hatami2019higher} is a standard application of Fourier analysis and can be thought of as a kind of ``inverse theorem''. We give the proof here for convenience.
\begin{proposition}\label{prop:rank_implies_uniform}
	Let $d\geq 2$ and $P\colon \F_p^n\to \T$ a polynomial with $r:=\mathrm{rank}_d(P)$. Then there exists a polynomial $Q$ of degree at most $d-1$ such that 
	\begin{equation*}
		|\langle e(P),e(Q)\rangle |\geq p^{-(1+\lceil(d-1)/(p-1)\rceil)r}.
	\end{equation*}
\end{proposition}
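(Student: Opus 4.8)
The plan is to leverage the bounded-depth structure of low-degree nonclassical polynomials (Proposition~\ref{prop:global_nonclassical_poly}) to turn the statement into a short Parseval-plus-counting argument over a finite abelian group. Throughout write $k:=\lceil (d-1)/(p-1)\rceil$.

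First I would unwind Definition~\ref{def:rank_poly}: since $r=\mathrm{rank}_d(P)$, there are nonclassical polynomials $Q_1,\dots,Q_r\colon\F_p^n\to\T$, each of degree at most $d-1$, together with a function $\Gamma\colon\T^r\to\T$ such that $P(x)=\Gamma(Q_1(x),\dots,Q_r(x))$ for every $x$. By Proposition~\ref{prop:global_nonclassical_poly} (and the depth bound recorded right after it), each $Q_i$, having degree at most $d-1$, has depth at most $k-1$, hence takes values in the finite subgroup $\frac{1}{p^{k}}\Z/\Z\subset\T$. Therefore the tuple $(Q_1(x),\dots,Q_r(x))$ always lies in the finite group $G:=\big(\frac{1}{p^{k}}\Z/\Z\big)^{r}$, of order $p^{kr}$, whose characters are exactly the maps $t\mapsto e(\sum_{i=1}^{r}a_it_i)$ with $a\in\{0,\dots,p^{k}-1\}^{r}$.

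Next I would Fourier-expand the restriction $F:=(e\circ\Gamma)|_{G}$ over $G$, say $F(t)=\sum_{a}c_a\,e(\sum_{i=1}^{r}a_it_i)$; since $|F(t)|=1$ for every $t\in G$, Parseval gives $\sum_{a}|c_a|^{2}=1$. Substituting $t=(Q_1(x),\dots,Q_r(x))$ yields the pointwise identity on $\F_p^n$
\begin{equation*}
	e(P(x))=\sum_{a}c_a\,e\big(Q_a(x)\big),\qquad Q_a:=\sum_{i=1}^{r}a_iQ_i .
\end{equation*}
The small but essential point is that each $Q_a$ is again a nonclassical polynomial of degree at most $d-1$: additive derivatives are additive and commute with multiplication by integers, so $\Delta_{h_1}\cdots\Delta_{h_d}Q_a=\sum_i a_i\,\Delta_{h_1}\cdots\Delta_{h_d}Q_i=0$.

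Finally I would pair this identity with $e(P)$ in the inner product on $\F_p^n$:
\begin{equation*}
	1=\langle e(P),e(P)\rangle=\sum_{a}\overline{c_a}\,\langle e(P),e(Q_a)\rangle\leq\Big(\max_a|\langle e(P),e(Q_a)\rangle|\Big)\sum_{a}|c_a| ,
\end{equation*}
and bound $\sum_{a}|c_a|\leq |G|^{1/2}\big(\sum_a|c_a|^{2}\big)^{1/2}=p^{kr/2}\leq p^{(1+k)r}$ by Cauchy--Schwarz. Hence some $a$ satisfies $|\langle e(P),e(Q_a)\rangle|\geq p^{-(1+k)r}$, and taking $Q:=Q_a$ for that $a$ finishes the proof. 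I do not anticipate a genuine obstacle here: the argument is routine once one is entitled to expand over the finite group $G$ (indeed it even yields the stronger bound $p^{-kr/2}$, so the stated form has some slack), and the only step needing care is invoking Proposition~\ref{prop:global_nonclassical_poly} to control the depths of the $Q_i$ — without that, the expansion would live on all of $\T^r$ and the counting step producing $p^{kr/2}$ would be vacuous.
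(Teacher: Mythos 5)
Your proof is correct and follows essentially the same route as the paper: expand $e\circ\Gamma$ over the finite abelian group in which the tuple $(Q_1,\dots,Q_r)$ takes values, pull back to get a decomposition $e(P)=\sum_a c_a e(Q_a)$ with each $Q_a$ of degree at most $d-1$, and extract one term with large correlation. The only differences are cosmetic improvements on your part: the paper bounds $|\hat{\Gamma}_\alpha|\leq 1$ and uses the triangle inequality directly to get $\max_a|\langle e(P),e(Q_a)\rangle|\geq |\hat G|^{-1}$, whereas you use Parseval ($\sum_a|c_a|^2=1$) together with Cauchy--Schwarz to get the sharper $|\hat G|^{-1/2}$, and you also use the tighter depth bound $k_i\leq \lceil(d-1)/(p-1)\rceil-1$; both refinements give a stronger exponent than stated, and of course still imply the claimed inequality.
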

\begin{proof}
	Let $\Gamma\colon \T^r\to\T$ be a map and $Q_1,\dots,Q_r$ be polynomials of degree at most $d-1$ such that
	\begin{align*}
		P(x)=\Gamma(Q_1(x),\dots,Q_r(x)).
	\end{align*}
We can assume that $\Gamma$ is a map with domain $G:=\prod_{i=1}^r\frac{1}{p^{k_i+1}}\Z/\Z$, where $k_i$ is the depth of $Q_i$. Let $\hat{G}=\prod_{i=1}^r\Z_{p^{k_i+1}}$ be the dual of $G$, so that the Fourier decomposition of $e(\Gamma)$ becomes
\begin{align*}
	e(\Gamma(z))=\sum_{\alpha\in \hat{G}}\hat{\Gamma}_\alpha e(\langle \alpha,z\rangle).
\end{align*}
The Fourier decomposition of $e(\Gamma)$ gives a ``higher-order'' Fourier decomposition of $e(P)$: for~$\alpha\in~\hat{G}$ define $Q_\alpha(x):=\sum_{i=1}^r\alpha_iQ_i(x)$, then
\begin{align*}
	e(P(x)) = \sum_{\alpha\in\hat{G}}\hat{\Gamma}_\alpha e(Q_\alpha(x)).
\end{align*}
So
\begin{align*}
	1=|\langle e(P),e(P)\rangle|\leq \sum_{\alpha\in \hat{G}}|\langle e(P),e(Q_\alpha) \rangle|.
\end{align*}
Indeed, there is an $\alpha^*$ such that
\begin{align*}
	|\langle e(P),e(Q_{\alpha^*}) \rangle|\geq |\hat{G}|^{-1}.
\end{align*}
Since the degree of the polynomials $Q_i$ is at most $d-1$, it follows that $k_i(p-1)\leq d-1$. This implies that 
\begin{align*}
	|\hat{G}| = p^{k_1+1+\dots+k_r+1} \leq p^{(1+\lceil(d-1)/(p-1)\rceil)r}.
\end{align*}
\end{proof}

The next lemma \cite[Lemma 7.2]{hatami2019higher} tells us how the rank changes if we restrict a polynomial to an affine subspace.
\begin{lemma}\label{lem:restrict_pol_subspace]}
	Let $P\colon \F_p^n\to\T$ be a polynomial of degree $d\geq 2$ and $r:=\mathrm{rank}(P)$. Let $U\subset \F_p^n$ be an affine subspace of codimension $k$ and define $P'$ to be the restriction of $P$ to~$U$. If $r>pk+1$ then $P'$ is a polynomial of degree $d$ and $\mathrm{rank}(P')\geq r-pk$.
\end{lemma}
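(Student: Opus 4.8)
The plan is to argue contrapositively via the characterization of rank given in Proposition \ref{prop:rank_implies_uniform}, pushing it back and forth between $\F_p^n$ and the subspace $U$. First I would pass to local coordinates on $U$: fix an affine isomorphism $A\colon \F_p^{n-k}\to U$ and set $\widetilde{P}:=P'\circ A$, so that $\mathrm{rank}(P')=\mathrm{rank}(\widetilde P)$ and likewise for the degree (restriction and affine change of coordinates preserve the degree of a nonclassical polynomial, since they commute with taking additive derivatives; this also shows $\deg(P')\le d$, and I will get $\deg(P')=d$ for free once the rank bound is in place, as a polynomial of rank $\ge 2$ cannot be of degree $\le d-1$). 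Suppose for contradiction that $\mathrm{rank}(\widetilde P)=:r'$ satisfies $r'<r-pk$, i.e. $r'\le r-pk-1$.

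Next I would apply Proposition \ref{prop:rank_implies_uniform} (with the same $d$) to $\widetilde P$ on $\F_p^{n-k}$: there is a polynomial $\widetilde Q$ of degree $\le d-1$ with $|\langle e(\widetilde P),e(\widetilde Q)\rangle_{\F_p^{n-k}}|\ge p^{-cr'}$ where $c:=1+\lceil(d-1)/(p-1)\rceil$. The key step is to transport this correlation back to $\F_p^n$. Transporting $\widetilde Q$ through $A^{-1}$ gives a polynomial $Q'$ of degree $\le d-1$ on $U$; the correlation of $e(P)$ with $e(Q')$ over $U$ equals $|\langle e(\widetilde P),e(\widetilde Q)\rangle|\ge p^{-cr'}$. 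Now I need to extend $Q'$ from $U$ to a nonclassical polynomial $\widehat Q$ on all of $\F_p^n$ of degree $\le d-1$ without losing too much correlation. Writing $U=\{x:\ell_1(x)=b_1,\dots,\ell_k(x)=b_k\}$ for affine-linear forms $\ell_i$, I would take $\widehat Q$ to be any degree-$\le d-1$ extension of $Q'$ (e.g. via the global form of Proposition \ref{prop:global_nonclassical_poly}, extending the coefficient data) and then write $1_U(x)=\E_{t\in\F_p^k}\,\omega_p^{\sum_i t_i(\ell_i(x)-b_i)}$. Expanding $\langle e(P),e(\widehat Q)\rangle_{\F_p^n}$ restricted to $U$ via this averaging identity and the triangle inequality produces, after absorbing the linear phases $\omega_p^{\sum t_i\ell_i}$ into $\widehat Q$ (legal, since adding a linear form keeps the degree $\le d-1$ for $d\ge2$), a family of at most $p^k$ polynomials $Q_t$ of degree $\le d-1$ with
\begin{align*}
  \max_{t}\,|\langle e(P),e(Q_t)\rangle_{\F_p^n}| \;\ge\; p^{-k}\cdot|U|/p^n\cdot p^{-cr'} \;=\; p^{-k}\cdot p^{-k}\cdot p^{-cr'},
\end{align*}
so some degree-$\le d-1$ polynomial $Q^\star$ has $|\langle e(P),e(Q^\star)\rangle|\ge p^{-2k-cr'}$.

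Finally I would close the loop by comparing this with the \emph{upper} bound on correlations forced by $\mathrm{rank}(P)=r$. Here is the one place where I should be careful: Proposition \ref{prop:rank_implies_uniform} gives a lower bound from small rank, not an upper bound from large rank, so I instead use the standard fact (a direct consequence of the same Fourier argument, or of the equidistribution theory of high-rank polynomials in \cite{hatami2019higher}) that if $\mathrm{rank}_d(P)=r$ then $|\langle e(P),e(Q)\rangle|< p^{-(r-1)/c'}$ for every polynomial $Q$ of degree $\le d-1$, for an appropriate constant $c'$ depending only on $p,d$; combined with $r'\le r-pk-1$ and a suitable (in fact lenient) choice of the constant hidden in ``$pk$'', the inequality $p^{-2k-cr'}\ge p^{-(r-1)/c'}$ becomes impossible for $r>pk+1$, giving the contradiction. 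The main obstacle I anticipate is exactly this bookkeeping of constants—matching the clean bound $\mathrm{rank}(P')\ge r-pk$ requires the extension/averaging step to cost at most an additive $pk-1$ in the rank, which forces one to be economical about how $1_U$ and the degree-$\le d-1$ extension interact, and to invoke the correct quantitative inverse/equidistribution statement from \cite{hatami2019higher} rather than the weaker Proposition \ref{prop:rank_implies_uniform} alone; everything else is routine manipulation of additive derivatives and Fourier expansions.
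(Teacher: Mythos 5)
Your approach is genuinely different from the paper's and it has a real gap that I don't think can be patched.

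The paper's proof is a short, purely combinatorial argument that does \emph{not} use any Fourier analysis: reduce to codimension $k=1$, put $U=\{x_n=0\}$, set $P''=P-P'\circ\pi$ (with $\pi$ the projection onto $U$), and observe that on each coset $U+h_a$ the function $P''$ agrees with a polynomial $Q_a$ of degree $\le d-1$ (because $\Delta_{h_a}P''$ has degree $\le d-1$ and $P''$ vanishes on $U$). This exhibits $P$ as a function $\Gamma(|x_n|/p, P'\circ\pi, Q_1,\dots,Q_{p-1})$ and yields immediately $\mathrm{rank}(P)\le \mathrm{rank}(P')+p$, i.e.\ the clean additive bound, and also the degree claim by the same inequality. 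No correlation estimates enter at all.

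Your route, by contrast, hinges on a statement you describe as ``a direct consequence of the same Fourier argument, or of the equidistribution theory of high-rank polynomials in \cite{hatami2019higher}'': namely that $\mathrm{rank}_d(P)=r$ forces $|\langle e(P),e(Q)\rangle|<p^{-(r-1)/c'}$ for \emph{all} $Q$ of degree $\le d-1$, with a constant $c'$ depending only on $p,d$. This is not a consequence of the Fourier argument in Proposition~\ref{prop:rank_implies_uniform}, which goes in the opposite direction (low rank $\Rightarrow$ some large correlation); taking the contrapositive gives ``all correlations small $\Rightarrow$ rank large,'' not ``rank large $\Rightarrow$ all correlations small.'' The latter is a genuine equidistribution theorem, and the known quantitative forms (Green--Tao, Kaufman--Lovett, and as presented in \cite{hatami2019higher}) have the shape ``for every $\epsilon>0$ there exists $r_0(d,\epsilon)$ such that rank $\ge r_0$ implies correlation $<\epsilon$,'' with $r_0$ depending very badly (Ackermann-type, certainly not polynomially) on $1/\epsilon$. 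So the inequality $|\langle e(P),e(Q)\rangle|< p^{-(r-1)/c'}$ is not available, and the contradiction you are aiming for does not materialize. Even granting such a bound hypothetically, the Fourier route only produces a multiplicative (constant-factor) relation between $\mathrm{rank}(P)$ and $\mathrm{rank}(P')$ together with additive $O(k)$ corrections from the $1_U$ expansion, whereas the lemma asserts the much sharper \emph{additive} loss $pk$. That cannot be recovered by being ``economical'' in the bookkeeping; it requires an argument, like the paper's, that manipulates the polynomials directly rather than their correlations.

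A secondary, smaller issue: your claim that $\deg(P')=d$ ``for free'' because ``a polynomial of rank $\ge 2$ cannot be of degree $\le d-1$'' is circular as stated---$\mathrm{rank}$ in the lemma means $\deg(P')$-rank, so if $\deg(P')\le d-1$ the quantity you would be bounding is not the $d$-rank. The paper handles this by noting that if $\deg(P')\le d-1$ then $P'$ is itself one of the lower-degree pieces in the decomposition of $P$, giving $\mathrm{rank}(P)\le p+1<r$ directly.
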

\begin{proof}
	We will prove the statement for $k=1$. The general case will follow after repeated application of the proof for $k=1$.
	
%	If $d=1$ either $P$ is constant on $U$ or else it is a degree one polynomial that is non-constant, hence has rank infinity.
%	
%	Assume $d>1$.
	Since rank and degree do not change under invertible affine linear transformations, we can assume without loss of generality that $U=\{x\in \F_p^n\colon x_n=0\}$. Let $\pi\colon \F_p^n\to \F_p^n$ be the projection onto $U$, so $\pi(x_1,\dots,x_n)=(x_1,\dots,x_{n-1},0)$. Define $P''=P-P'\circ \pi$. For $x\in U$ we have~$P''(x)=0$. Let $a\in\F_p\setminus \{0\}$ and $h_a=(0,\dots,0,a)\in \F_p^n$. We have that $\Delta_{h_a}P''$ has degree at most~$d-1$ and that~$\Delta_{h_a}P''(y)=P''(y+h)$ for all $y\in U$. So $P''$ agrees with a polynomial~$Q_a$ of degree at most $d-1$ on $U+h_a$. This implies there is a function $\Gamma\colon \T^{p+1}\to\T$ such that~$P(x)=\Gamma(|x_n|/p,P'(x),Q_1(x),\dots, Q_{p-1}(x))$. 
	
	Now, if $P'$ has degree at most $d-1$, then $\mathrm{rank}(P)\leq p+1<r$, which is a contradiction. If~$P'$ has rank $<r-p$, we get that $\mathrm{rank}(P)<r$ which is again a contradiction.
\end{proof}

%%%%%%%%%%%%%%%%%%%%%%%%%%%%%%%%%%%%%%%%%%%%%%%%%%%
Next, we define the \emph{Fourier rank} of a function which is a new notion of rank for functions and prove straightforward properties of it.
\begin{definition}\label{def:frank}
	Let $f\colon \F_p^n\to \C$ be a function and $d\geq 1$. The degree-$d$ Fourier rank of $f$, denoted $\mathrm{frank}_d(f)$, is the minimal $r$ such that there are polynomials $Q_1,\dots, Q_r$ of degree at most~$d-1$ such that for all $x\in \F_p^n$
	\begin{align}\label{eq:frank_decomposition}
		f(x)=\sum_{i=1}^rc_ie(Q_i(x)).
	\end{align}
\end{definition}
The degree-2 Fourier rank of a function $f$ is also known as the \emph{Fourier sparsity} of $f$. The following lemma relates the notion of rank of a polynomial and its Fourier rank.
\begin{lemma}\label{lemma:rank->frank}
	Let $P\colon \F_p^n\to \T$ be a polynomial and $d\geq 2$. Then $\mathrm{frank}_d(e(P))\geq \mathrm{rank}_d(P)$.
\end{lemma}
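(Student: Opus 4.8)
The plan is to unwind both definitions and observe that a Fourier-rank decomposition of $e(P)$ is essentially the same data as a rank decomposition of $P$, once one accounts for the fact that Fourier rank allows complex coefficients while rank insists on composing polynomials through a single function $\Gamma$. Let $r = \mathrm{frank}_d(e(P))$ and fix a decomposition $e(P(x)) = \sum_{i=1}^r c_i e(Q_i(x))$ with each $Q_i$ a nonclassical polynomial of degree at most $d-1$; I want to produce from this a function $\Gamma\colon \T^r \to \T$ with $\Gamma(Q_1(x),\dots,Q_r(x)) = P(x)$ for all $x$, which witnesses $\mathrm{rank}_d(P) \le r$.

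First I would note the key pointwise fact: since $|e(P(x))| = 1$, for every $x$ the value $e(P(x))$ is a unit complex number, and it is determined by the tuple $(Q_1(x),\dots,Q_r(x))$ via the fixed formula $\sum_i c_i e(\cdot_i)$. More precisely, define $g\colon \T^r \to \C$ by $g(t_1,\dots,t_r) = \sum_{i=1}^r c_i e(t_i)$; then $g(Q_1(x),\dots,Q_r(x)) = e(P(x))$ has modulus $1$ for every $x$. So on the (finite) image set $S := \{(Q_1(x),\dots,Q_r(x)) : x \in \F_p^n\} \subseteq \T^r$, the function $g$ takes values in the unit circle, and we may define $\Gamma$ on $S$ by choosing, for each $s \in S$, the unique $\Gamma(s) \in \T$ with $e(\Gamma(s)) = g(s)$; extend $\Gamma$ arbitrarily (say by $0$) off $S$. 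This is well-defined precisely because $g(s)$ is a single unit-modulus complex number, hence has a unique preimage under $e$. Then by construction $e(\Gamma(Q_1(x),\dots,Q_r(x))) = e(P(x))$ for all $x$, and applying the injectivity of $e$ on $\T$ again gives $\Gamma(Q_1(x),\dots,Q_r(x)) = P(x)$. Since the $Q_i$ have degree $\le d-1$, this exhibits a rank-$d$ representation of $P$ with $r$ polynomials, so $\mathrm{rank}_d(P) \le r = \mathrm{frank}_d(e(P))$, as claimed.

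The only thing that needs a moment's care — and what I'd flag as the single substantive point rather than a real obstacle — is the well-definedness of $\Gamma$: one must observe that although $g$ as a function on all of $\T^r$ certainly need not be unit-valued, it \emph{is} unit-valued on the relevant image set $S$, because on that set it agrees with $e(P(x))$. There is no need for $g$ itself to factor through $e$ globally; we only extract $\Gamma$ on $S$ and are free to define it however we like elsewhere, since Definition \ref{def:rank_poly} only constrains $\Gamma$ on the tuples actually attained. Apart from this, the argument is a direct translation between the two definitions, using nothing beyond the injectivity of $t \mapsto e(2\pi i t)$ on $\T = \R/\Z$ and the finiteness of the domain $\F_p^n$. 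One subtlety worth stating explicitly in the write-up: the degree-at-most-$(d-1)$ polynomials $Q_1,\dots,Q_r$ from the Fourier-rank decomposition are reused verbatim as the polynomials in the rank decomposition, so no bound on degrees is lost, and the hypothesis $d \ge 2$ is used only to ensure we are in the regime where Definition \ref{def:rank_poly} is the interesting one (for $d = 1$ the statement still holds trivially by inspection of the edge cases in that definition).
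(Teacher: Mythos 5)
Your proposal is correct and follows essentially the same route as the paper: both take the Fourier-rank decomposition $e(P(x))=\sum_i c_i e(Q_i(x))$, observe that the right-hand side depends on $x$ only through the tuple $(Q_1(x),\dots,Q_r(x))$ and has unit modulus on the image set, and use that to build $\Gamma\colon\T^r\to\T$ with $\Gamma(Q_1(x),\dots,Q_r(x))=P(x)$, extending $\Gamma$ by an arbitrary constant off the image. Your write-up is slightly more explicit than the paper about why $\Gamma$ is well-defined and single-valued on the image, but the argument is the same.
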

\begin{proof}
	Denote by $r'$ the degree $d$ Fourier rank of $e(P)$. So there are polynomials $Q_1,\dots, Q_{r'}$ of degree at most $d-1$ such that we have a decomposition
	\begin{align}\label{eq:P_decomposition}
		e(P(x))=\sum_{i=1}^{r'}c_ie(Q_i(x)).
	\end{align}
	
	We will now define a function $\Gamma\colon \T^{r'}\to \T$ as follows. Let $Q\colon \F_p^{n}\to\T^{r'}$ be defined by~$Q(x)=~(Q_1(x),\dots, Q_{r'}(x))$. The map $\Gamma$ on the image of $Q$ is defined by
	\begin{align*}
		e(\Gamma(Q_1(x),\dots, Q_{r'}(x)))=\sum_{i=1}^{r'}c_ie(Q_i(x)).
	\end{align*}
	For a point $z$ in the complement of the image of $Q$, we simply set $\Gamma(z)=0$ or any other constant value would do. But the $\Gamma$ we just defined has the property that $\Gamma(Q_1(x),\dots, Q_{r'}(x))=P(x)$ by (\ref{eq:P_decomposition}). Hence $r'\geq r$, proving the statement.
\end{proof}
For $d=2$, Sanyal~\cite{sanyal2019fourier} shows that $\mathrm{frank}_2(e(P))\geq \mathrm{rank}_2(P)^2$. This is quadratically better than the above lemma. See Section~\ref{sec:stab_discussion} for a discussion for the case $d>2$.

In the other direction, we have the following lemma.
\begin{lemma}\label{lemma:frank->rank}
	Let $d\geq 2$ and $P\colon \F_p^n\to \T$ a polynomial. Then $$\mathrm{frank}_d(e(P))\leq p^{(1+\lceil(d-1)/(p-1)\rceil)\mathrm{rank}_d(P)}.$$
\end{lemma}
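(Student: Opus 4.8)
The plan is to run the same Fourier-analytic argument as in the proof of Proposition~\ref{prop:rank_implies_uniform}, but this time keeping track of how many higher-order Fourier characters appear rather than just extracting the largest one. Set $r := \mathrm{rank}_d(P)$ and fix, via Definition~\ref{def:rank_poly}, nonclassical polynomials $Q_1,\dots,Q_r$ of degree at most $d-1$ together with a function $\Gamma\colon\T^r\to\T$ such that $P(x)=\Gamma(Q_1(x),\dots,Q_r(x))$. Writing $k_i$ for the depth of $Q_i$, Proposition~\ref{prop:global_nonclassical_poly} tells us that $Q_i$ takes values in $\tfrac{1}{p^{k_i+1}}\Z/\Z$, so the tuple $(Q_1(x),\dots,Q_r(x))$ always lies in the finite group $G:=\prod_{i=1}^r\tfrac{1}{p^{k_i+1}}\Z/\Z$; hence we may assume $\Gamma$ has domain $G$.

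First I would Fourier-expand $e(\Gamma(\cdot))$ over the dual group $\hat G=\prod_{i=1}^r\Z_{p^{k_i+1}}$, writing $e(\Gamma(z))=\sum_{\alpha\in\hat G}\hat\Gamma_\alpha\,e(\langle\alpha,z\rangle)$, and then substitute $z=(Q_1(x),\dots,Q_r(x))$. With $Q_\alpha(x):=\sum_{i=1}^r\alpha_iQ_i(x)$ this gives
\[
	e(P(x))=\sum_{\alpha\in\hat G}\hat\Gamma_\alpha\,e(Q_\alpha(x)).
\]
Since additive derivatives are additive, $\Delta_{h_d}\cdots\Delta_{h_1}Q_\alpha=\sum_i\alpha_i\,\Delta_{h_d}\cdots\Delta_{h_1}Q_i=0$, so each $Q_\alpha$ is again a nonclassical polynomial of degree at most $d-1$. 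Dropping the vanishing coefficients, this displays $e(P)$ as a linear combination of at most $|\hat G|$ polynomial phase functions of degree at most $d-1$, so by Definition~\ref{def:frank} we obtain $\mathrm{frank}_d(e(P))\le|\hat G|$.

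It then remains to bound $|\hat G|=p^{\sum_{i=1}^r(k_i+1)}$. Since $Q_i$ has degree at most $d-1$, its depth obeys $k_i(p-1)\le d-1$, whence $k_i+1\le 1+\lceil(d-1)/(p-1)\rceil$, exactly as in the proof of Proposition~\ref{prop:rank_implies_uniform}. Summing over $i$ yields $\sum_{i=1}^r(k_i+1)\le(1+\lceil(d-1)/(p-1)\rceil)\,r$, and therefore $\mathrm{frank}_d(e(P))\le p^{(1+\lceil(d-1)/(p-1)\rceil)\,r}$, which is the claimed bound.

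I do not expect a genuine obstacle here: this lemma is essentially the ``counting'' counterpart of Proposition~\ref{prop:rank_implies_uniform}, so most of the work is already available. The only points needing a moment of care are that the image of the rank decomposition really lies in a finite group---so that the Fourier expansion of $\Gamma$ is a finite sum with a controlled number of characters---and that integer linear combinations of the $Q_i$ remain nonclassical polynomials of degree at most $d-1$; both follow immediately from Proposition~\ref{prop:global_nonclassical_poly} and the additivity of the operator $\Delta_h$.
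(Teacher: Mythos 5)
Your proof is correct and follows the same Fourier-analytic argument as the paper: expand $e(\Gamma)$ over the dual group $\hat G = \prod_i \Z_{p^{k_i+1}}$, substitute the $Q_i$ to get a degree-$(d-1)$ Fourier expansion of $e(P)$, and bound $|\hat G|$ via the depth constraint $k_i(p-1)\le d-1$. The only addition is that you explicitly verify that each $Q_\alpha$ is nonclassical of degree at most $d-1$, a step the paper takes for granted.
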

\begin{proof}
	Let $r:=\mathrm{rank}_d(P)$. Then there is a function $\Gamma\colon \T^r\to \T$ and polynomials $Q_1,\dots, Q_r$ of degree at most $d-1$ such that $\Gamma(Q_1(x),\dots,Q_r(x))=P(x)$. The Fourier expansion of $\Gamma$ gives us a degree $d-1$ Fourier expansion of $P$, namely
	\begin{align*}
		e(\Gamma(Q_1(x),\dots,Q_r(x)))&=\sum_{\alpha\in \widehat{G}}\widehat{\Gamma}(\alpha)e(Q_\alpha(x))\\
		&=e(P(x)),
	\end{align*}
	where $Q_\alpha(x)=\sum_{i=1}^r\alpha_iQ_i(x)$ and $\widehat{G}=\prod_{i=1}^r\Z_{p^{k_i+1}}$ is the dual of the group $G=~\prod_{i=1}^r\frac{1}{p^{k_i+1}}\Z/\Z$ and $k_i$ is the depth of $Q_i$.
	Since there are at most	$$|\widehat{G}|\leq~p^{(1+\lceil(d-1)/(p-1)\rceil)r}$$ Fourier coefficients in the above expression, the result follows.
\end{proof}

\section{Magic states in prime dimension}
In this section we will give the explicit form of the magic state that we obtain by using the generalization of the $T$ gate in odd prime dimensions \cite{howard2012qudit}. We will then proceed to show that these magic states have exponentially small correlation with quadratic phase functions. 

\subsection{Generalization of the $T$ gate}\label{subsec:magic_states}
The generalization of the $T$ gate, and hence the corresponding magic state, curiously enough depends on whether the prime dimension $p$ is equal to three or $p>3$. 
\begin{itemize}
	\item But first, let us consider the case $p=2$. In this case, the $T$-gate is given by $T = \begin{psmallmatrix}
		1 & 0\\
		0& e^{i\pi/4}
	\end{psmallmatrix}$ and the corresponding single qubit magic state is given by $T\ket{+}=\ket{T} = \frac{\ket{0}+e^{i\pi/4}\ket{1}}{\sqrt{2}}$, hence the $n$-qubit magic state is
	\begin{equation}\label{eq:magicstate_p_2}
		\ket{T}^{\otimes n} = \frac{1}{2^{n/2}}\sum_{x\in \F_2^n}e(|x|/8)\ket{x}.
	\end{equation}
	Here, the polynomial $P\colon \F_2^n\to \T\colon x\mapsto |x|/8$ is a nonclassical polynomial of degree exactly three. This follows immediately from Proposition \ref{prop:global_nonclassical_poly}, see also Example \ref{example:cubic_polynomials}. We will see a similar phenomenon in other prime dimensions.
	\item In the case that $p=3$, let $\xi =  e^{2\pi i/9}$ be a ninth-root of unity. The generalization of the~$T$-gate for $p=3$, denoted by $U$, is given by
	\begin{equation*}
		U = \begin{pmatrix}
			1 & 0 & 0\\
			0 & \xi & 0 \\
			0 &0 & \xi^2
		\end{pmatrix}.
	\end{equation*}
	The corresponding single qudit magic state is then
	\begin{equation*}
		\ket{\psi_U} = U\ket{+} = \frac{\ket{0}+\xi\ket{1}+\xi^2\ket{3}}{\sqrt{3}}.
	\end{equation*}
	The $n$-qudit magic state in this case is
	\begin{equation}\label{eq:magicstate_p3}
		\ket{\psi_U}^{\otimes n} = \frac{1}{3^{n/2}}\sum_{x\in \F_3^n}e(|x|/9)\ket{x}.
	\end{equation}
	The polynomial $P\colon \F_3^n\to\T\colon x\mapsto |x|/9$ is again a nonclassical polynomial of degree three which follows from Proposition \ref{prop:global_nonclassical_poly}, see also Example \ref{example:cubic_polynomials}.
	
	\item The case $p>3$ are all similar, but different from the previous two cases. In this case, let~$\omega_p=e^{2\pi i/p}$ be a $p$-th root of unity and $P\colon \F_p\to\F_p$ a classical polynomial of degree three. Define 
	\begin{equation*}
		U:= \sum_{x\in \F_p}\omega_p^{P(x)}\ketbra{x}{x}.
	\end{equation*}
	This gate is then a non-clifford gate \cite{howard2012qudit} and could serve as a generalization of the $T$ gate. The condition that $P$ has degree three is important: if $P$ has degree two, then~$U$ is a Clifford gate. The corresponding $n$-qudit magic state is then
	\begin{equation}\label{eq:magicstate_p5}
		\ket{\psi_U}^{\otimes n} = \frac{1}{p^{n/2}}\sum_{x\in \F_p^n}\omega_p^{P_n(x)}\ket{x},
	\end{equation}
	where $P_n(x)= \sum_{i=1}^n P(x_i)$.  Unlike the $p=2,3$ case, the polynomial $P_n$ is a \emph{classical} polynomial of degree three. Coincidentally, there are no cubic nonclassical polynomials for~$p>3$.
\end{itemize}

\subsection{Correlation with quadratic phase functions}
We will now show that for the $n$-qudit magic states defined as above, the correlation with quantum states whose amplitudes are given by a quadratic phase functions is exponentially small.
We need the following basic lemma, see \cite{lovett2008inverse}. We will state and prove it here for convenience.
\begin{lemma}\label{lemma:lovett}
	For any two functions $f,g\colon \F_p^n\to \T$, we have
	\begin{align*}
		|\langle e(f),e(g)\rangle|^4\leq \E_{h\in \F_p^n}|\langle e(\Delta_h f), e(\Delta_h g) \rangle|^2.
	\end{align*}
\end{lemma}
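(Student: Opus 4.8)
The plan is to expand the inner product, apply Cauchy–Schwarz twice, and recognize the resulting expression as an average over shifts of the derivative correlations. Write out
\begin{align*}
	\langle e(f),e(g)\rangle = \E_{x\in\F_p^n} e(f(x)-g(x)),
\end{align*}
so that $|\langle e(f),e(g)\rangle|^2 = \E_{x,y} e\bigl((f(x)-g(x))-(f(y)-g(y))\bigr)$. Substituting $y = x+h$ and averaging over $h\in\F_p^n$ (which is a valid reparametrization of the pair $(x,y)$) turns this into $\E_{h}\,\E_{x} e\bigl(\Delta_h f(x) - \Delta_h g(x)\bigr) = \E_h \langle e(\Delta_h f), e(\Delta_h g)\rangle$. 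Hence
\begin{align*}
	|\langle e(f),e(g)\rangle|^2 = \E_{h\in\F_p^n} \langle e(\Delta_h f), e(\Delta_h g)\rangle.
\end{align*}

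Now square both sides and apply Cauchy–Schwarz to the average over $h$:
\begin{align*}
	|\langle e(f),e(g)\rangle|^4 = \Bigl| \E_{h} \langle e(\Delta_h f), e(\Delta_h g)\rangle \Bigr|^2 \leq \E_{h} \bigl| \langle e(\Delta_h f), e(\Delta_h g)\rangle \bigr|^2,
\end{align*}
which is exactly the claimed inequality. The only small point to check carefully is the reindexing step: that summing over all pairs $(x,y)\in\F_p^n\times\F_p^n$ is the same as summing over all pairs $(x,h)$ via $h=y-x$, with the correct normalization $p^{-2n}$ in both cases; this is immediate since $h\mapsto x+h$ is a bijection of $\F_p^n$ for each fixed $x$.

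I do not expect any serious obstacle here — this is the standard van der Corput / Cauchy–Schwarz manipulation underlying Gowers-norm monotonicity, and the argument is essentially two lines once the reparametrization is made explicit. The one place to be mildly careful is keeping the modulus signs and the squaring consistent (the first application of Cauchy–Schwarz is actually just $|\E(\cdot)|^2 \le \E|\cdot|^2$, i.e. Jensen/Cauchy–Schwarz for the probability measure that is uniform over $h$), so that no spurious factors of $p^n$ appear.
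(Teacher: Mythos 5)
Your proposal is correct and is essentially the same Cauchy–Schwarz/van der Corput argument used in the paper: expand $|\langle e(f),e(g)\rangle|^2$ as a double average, reparametrize $y=x+h$ to obtain $\E_h \langle e(\Delta_h f), e(\Delta_h g)\rangle$ (up to a conjugate, which is harmless since the quantity is real), and then apply $|\E_h(\cdot)|^2 \le \E_h|\cdot|^2$. The paper writes the same chain of equalities starting from the right-hand side and with absolute values in place throughout; your version is the mirror image.
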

\begin{proof} By the Cauchy-Schwarz inequality,
	\begin{align*}
		\sqrt{\E_h|\langle e(\Delta_hf),e(\Delta_hg)\rangle|^2} & \geq |\E_h\langle e(\Delta_hf),e(\Delta_hg) \rangle|\\
		&=|\E_{x,h}e(f(x+h)-f(x)-(g(x+h)-g(x)))|\\
		&=|\langle e(f),e(g)\rangle |^2.
	\end{align*}
\end{proof}
In other words, we can compute the correlation between two phase functions by computing the correlations between their derivatives and taking their average. If the derivatives are easier to work with, this will become useful.
\begin{proposition}\label{prop:correlation_Tn_with_quadratic}
	Let $P\colon \F_p^n\to\T$ be the polynomial in the phase of the amplitudes in either Equation (\ref{eq:magicstate_p_2}), (\ref{eq:magicstate_p3}) or (\ref{eq:magicstate_p5}). Then, for any nonclassical polynomial $Q\colon\F_p^n\to~\T$ of degree at most two,
	\begin{align*}
		|\langle e(P),e(Q)\rangle|\leq 2^{-cn},
	\end{align*}
	for some $c>0$ depending on $p$.
\end{proposition}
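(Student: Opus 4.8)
The plan is to exploit Lemma~\ref{lemma:lovett}, which reduces estimating $|\langle e(P),e(Q)\rangle|$ to estimating the average over $h$ of $|\langle e(\Delta_h P),e(\Delta_h Q)\rangle|$. The point of passing to the derivative is that $Q$ has degree at most two, so $\Delta_h Q$ has degree at most one, hence by Lemma~\ref{lem:nonclassical_degree_1_are_classical} it is (up to a shift) a classical linear polynomial, i.e. $e(\Delta_h Q)$ is, up to a unimodular constant, a character $\omega_p^{\langle \alpha_h, x\rangle}$ for some $\alpha_h \in \F_p^n$ depending on $h$. Therefore $|\langle e(\Delta_h P), e(\Delta_h Q)\rangle| = |\widehat{e(\Delta_h P)}(\beta_h)|$ for a suitable $\beta_h$, and in particular it is bounded by $\max_\alpha |\widehat{e(\Delta_h P)}(\alpha)|$. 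So it suffices to show that for a positive fraction of $h$ (or on average), the function $e(\Delta_h P)$ is Fourier-uniform: $\max_\alpha |\widehat{e(\Delta_h P)}(\alpha)| \le 2^{-c'n}$.

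The second step is to compute $\Delta_h P$ explicitly for each of the three cases and see that it is a quadratic phase function with a controlled structure. For $p>3$, $P_n(x) = \sum_i P(x_i)$ with $P$ a degree-three classical polynomial, so $\Delta_h P_n(x) = \sum_i (P(x_i+h_i) - P(x_i))$ splits as a sum of single-variable quadratics; the leading quadratic coefficient of the $i$-th term is proportional to $h_i$ (with the cubic leading coefficient of $P$ as a nonzero constant), so whenever $h_i \ne 0$ the $i$-th term is a genuine nonzero single-variable quadratic, and $\widehat{e(\Delta_h P_n)}$ factorizes over coordinates into one-dimensional Gauss-type sums. Each nondegenerate single-variable quadratic Gauss sum over $\F_p$ has modulus $p^{-1/2}$, so $\max_\alpha|\widehat{e(\Delta_h P_n)}(\alpha)| \le p^{-|\{i : h_i \ne 0\}|/2}$. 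Averaging over $h \in \F_p^n$: the number of nonzero coordinates of a uniformly random $h$ is $\mathrm{Bin}(n, 1-1/p)$, which is $\ge cn$ except with exponentially small probability, so $\E_h (p^{-|\{i:h_i\ne 0\}|})$ — or rather its relevant power — is $2^{-\Omega(n)}$. Feeding this back through Lemma~\ref{lemma:lovett} gives $|\langle e(P),e(Q)\rangle|^4 \le 2^{-\Omega(n)}$, hence the claim. For $p=2$ and $p=3$ one does the same with $P(x)=|x|/8$ and $P(x)=|x|/9$ respectively, using the identities~(\ref{eq:property_||}) and~(\ref{eq:property_||_3}) to compute $\Delta_h P$ as in Example~\ref{example:cubic_polynomials}: again $\Delta_h P$ is a sum over coordinates of one-variable (non-classical) quadratics, and on each coordinate $i$ with $h_i \ne 0$ the relevant exponential sum is a complete sum with modulus $p^{-1/2}$ (for $p=2$ the sum $\sum_{x_i\in\F_2} e(-|x_i h_i|/2)\omega_2^{\alpha_i x_i}$ has modulus $2^{-1/2}$ when $h_i=1$; for $p=3$ similarly one gets a quadratic Gauss sum of modulus $3^{-1/2}$). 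The coordinate-wise factorization again yields $\max_\alpha |\widehat{e(\Delta_h P)}(\alpha)| \le p^{-|\{i:h_i\ne 0\}|/2}$.

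The main obstacle, and the only place requiring care, is the bookkeeping in converting "$e(\Delta_h Q)$ is a character up to a constant, so the inner product is a single Fourier coefficient of $e(\Delta_h P)$" into a clean bound, and then checking in each of the three regimes that the per-coordinate exponential sum really is nondegenerate exactly when $h_i \ne 0$ (this is where the cubic, as opposed to quadratic, nature of the gate matters — it is what makes the derivative genuinely quadratic rather than linear in each coordinate). Once the per-coordinate sum is identified as a nondegenerate quadratic Gauss sum with modulus $p^{-1/2}$, the rest is the Chernoff-type estimate on the Hamming weight of a random $h$, which is routine. I would organize the write-up as: (1) reduce via Lemma~\ref{lemma:lovett}; (2) use Lemma~\ref{lem:nonclassical_degree_1_are_classical} to replace $e(\Delta_h Q)$ by a character; (3) compute $\Delta_h P$ and factor the Fourier coefficient over coordinates; (4) bound each factor by $p^{-1/2}$ when $h_i \ne 0$; (5) average over $h$ and take fourth roots, obtaining $c = c(p) > 0$.
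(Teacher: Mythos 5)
Your proposal is correct and follows essentially the same approach as the paper: reduce via Lemma~\ref{lemma:lovett}, use Lemma~\ref{lem:nonclassical_degree_1_are_classical} so that $e(\Delta_h Q)$ is a character (up to a unimodular constant) and the inner product a single Fourier coefficient of $e(\Delta_h P)$, bound that coefficient by $p^{-|\{i\colon h_i\neq 0\}|/2}$ via a coordinate-wise Gauss-type sum, and average over $h$. The only cosmetic difference is that the paper evaluates the average $\E_h 2^{-|h|}=(3/4)^n$ exactly with the binomial theorem rather than invoking a Chernoff tail bound, and obtains the per-coordinate $p^{-1/2}$ through a direct change-of-variables computation of $|\hat f(\alpha)|^2$ rather than naming it a Gauss sum.
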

\begin{proof}
We will only prove it for the case $p=2$, since the other case $p>2$ is very similar. We will show that the polynomial phase function $e(|x|/8)$, giving the amplitudes of the $n$-qubit magic state (see (\ref{eq:magicstate_p_2})), has exponentially small correlation with quadratic phase functions. First we compute the derivative of $P$; let $h\in \F_2^n$. Then
 	\begin{align*}
 		P(x+h)-P(x)&=|x+h|/8-|x|/8=\sum_i |x_i+h_i|/8-|x|/8\\
 		&=\sum_i(|x_i|+|h_i|-2|x_i||h_i|)/8-|x|/8\\
 		&=|h|/8-|x\circ h|/4.
 	\end{align*}
 In the second line, we used the property (\ref{eq:property_||}).
 	Let $f\colon x\mapsto e(|x\circ h|/4)$. We need the magnitudes of the Fourier coefficients of $f$; we have for~$\alpha\in \F_2^n$
 	
 	\begin{align*}
 		\hat{f}(\alpha) = \E_x e^{i\pi (|x\circ h|-2|\alpha\circ x|)/2},
 	\end{align*}
 	so
 	\begin{align*}
 		|\hat{f}(\alpha)|^2&=\E_{x,y}e^{i\pi(|x\circ h|-|y\circ h|-2|\alpha \circ x|+2|\alpha \circ y|)/2}\\
 		&= \E_{y,z}e^{i\pi (|y\circ h|+|z\circ h|-2|y\circ z\circ h| -|y\circ  h|-2|\alpha \circ z|)/2}\\
 		&=\E_{y,z}e^{i\pi (|z\circ h|-2|y\circ z\circ h|-2|\alpha \circ z|)/2}\\
 		&=\E_ze^{i\pi(|z\circ h|-2|\alpha\circ z|)/2}\E_y(-1)^{|y\circ z\circ h|}.
 	\end{align*}
 In the second line we used a change of variables $x=y+z$ and the property in (\ref{eq:property_||}) so that
 \begin{align*}
 	|x\circ h|& = |(y+z)\circ h| = |y\circ h + z\circ h| = |y\circ h|+|z\circ h|-2|(y\circ h) \circ (z\circ h)| \\
 	&= |y\circ h|+|z\circ h|-2|y\circ z\circ h|.
 \end{align*}
 	The expectation over $y$ is 0 unless $h_i=1\Rightarrow z_i=0$ in which case it is equal to 1. So continuing where we left off
 	\begin{align*}
 		&=\frac{1}{2^n}\sum_{z:h_i=1\Rightarrow z_i=0}e^{i\pi(|z\circ h|-2|\alpha\circ z|)/2}\\
 		&=\frac{1}{2^n}\sum_{z:h_i=1\Rightarrow z_i=0}(-1)^{|\alpha\circ z|}\\
 	\end{align*}
 	which is 0 unless $h_i=0\Rightarrow \alpha_i=0$, in which case it equals $2^{-|h|}$. So the non-zero Fourier coefficients of $f$ are at those $\alpha\in \F_2^n$ for which it holds that $h_i=0\Rightarrow \alpha_i=0$. 
 	
 	We will now show that there is exponentially small correlation between $e(P)$ and any nonclassical polynomial phase function $e(Q)$ of degree two. By Lemma \ref{lemma:lovett} and Parseval, we have that 
 	
 	\begin{align*}
 		|\langle e(P),e(Q)\rangle|^4 & \leq \E_h|\langle e(\Delta_h P), e(\Delta_h Q) \rangle|^2\\
 		&= \E_h|\langle \widehat{e(\Delta_h P)}, \widehat{e(\Delta_h Q)} \rangle|^2\\
 		&\leq \E_h 2^{-|h|}\\
 		&=\frac{1}{2^n}\sum_{k=0}^n2^{-k}|\{h\in \F_2^n\colon |h|=k\}|\\
 		&=\left(\frac{3}{4}\right)^n.
 	\end{align*}
 	In the third line, we used that $\Delta_h Q$ is a degree one classical polynomial (using Lemma \ref{lem:nonclassical_degree_1_are_classical}), which means that $e(\Delta_h Q)$ has only one non-zero Fourier coefficient and that the magnitude of all the Fourier coefficients of $e(\Delta_h(P))$ is at most $2^{-|h|/2}$, see the computation of $|\hat{f}(\alpha)|^2$ above where $f(x) = e(|h|/8)e(-\Delta_hP(x))$. 	
 	
For $p=3$, the cubic phase function giving the amplitudes of the $n$-qudit magic states is given by (\ref{eq:magicstate_p3}). One follows the same strategy by first computing the derivative of the polynomial $x\mapsto |x|/9$ using the property (\ref{eq:property_||_3}) and computing the magnitude of the Fourier coefficients of its associated phase function and concluding with a use of Lemma~\ref{lemma:lovett}. The case $p>3$ is simpler as it does not require any property of the map $|\cdot|$, since the polynomial in the phase of (\ref{eq:magicstate_p5}) is a classical polynomial.
\end{proof}

\section{Stabilizer rank of the $n$-qudit magic state}
Here we prove Theorem \ref{thm:main}, our main result.

The following claim from \cite{peleg2021lower} is needed to get a handle on the affine subspaces that appear with the stabilizer states in a stabilizer decomposition.
\begin{claim}[\cite{peleg2021lower}]\label{lem:large_affine_subspace}
	Let $p$ be a prime and $H_1,\dots, H_r\subset \F_p^n$ be a collection of affine subspaces and assume~$r\leq n/2$. Then there exists an affine subspace $U$ of dimension at least $n-2r$ and a subset $S\subset [r]$ such that for all $x\in U$
		\begin{equation*}
			1_{H_i}(x)=
			\begin{cases*}
				1 & if $i\in S$ \\
				0 & otherwise.	
			\end{cases*}
		\end{equation*}
	
\end{claim}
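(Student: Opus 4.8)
The plan is to argue by induction on $r$, peeling off one affine subspace at a time and recording whether it contains the current working subspace or not. The base case $r=0$ is trivial: take $U=\F_p^n$ and $S=\emptyset$. For the inductive step, suppose we have already processed $H_1,\dots,H_{r-1}$ and obtained an affine subspace $U'$ of dimension at least $n-2(r-1)$ together with $S'\subset[r-1]$ such that $1_{H_i}$ is constant (equal to $1_{i\in S'}$) on all of $U'$. Now consider how $H_r$ meets $U'$. Writing $H_r=L(H_r)+v$, the intersection $H_r\cap U'$ is either empty or an affine subspace of $U'$; in the latter case its codimension inside $U'$ is at most $\dim(U')-\dim(H_r\cap U')$.

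The key case distinction is whether $H_r\cap U'$ is ``large'' inside $U'$ or not. First I would handle the easy subcase: if $U'\subset H_r$, then $1_{H_r}\equiv 1$ on $U'$, so we simply set $U=U'$ and $S=S'\cup\{r\}$, and the dimension bound $n-2(r-1)\ge n-2r$ holds with room to spare. Otherwise, $H_r$ does not contain $U'$, so $H_r\cap U'$ is a proper affine subspace of $U'$ (possibly empty), hence has codimension at least $1$ in $U'$; equivalently, its complement $U'\setminus H_r$ is nonempty. The point is that the complement of a codimension-$\ge 1$ affine subspace inside an affine space of dimension $m$ contains an affine subspace of dimension at least $m-2$: indeed, restricting to a single coset of $L(H_r\cap U')$-direction, one finds an affine subspace of dimension $\dim(U')-1$ entirely avoiding $H_r$ except we must remove one more constraint — more carefully, one picks a point $x_0\in U'\setminus H_r$ and a complementary direction, and a short linear-algebra argument (essentially: $H_r\cap U'$ is cut out of $U'$ by at least one nontrivial affine-linear equation $\ell(x)=c$, and the set $\{x\in U': \ell(x)=\ell(x_0)\}$ is an affine subspace of codimension $1$ in $U'$ disjoint from $H_r$) shows we lose at most one dimension, not two. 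I would want to double-check the exact accounting here, because the statement allows a loss of $2$ per subspace, which suggests the intended argument is deliberately lossy: it may simply split each $H_i$ into the hyperplane it spans plus the complement, or intersect with a generic hyperplane to force the dichotomy, costing one dimension for ``$\in$'' bookkeeping and one for ``$\notin$''.

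Assembling the induction: at each of the $r$ steps we pass to a subspace of codimension at most $2$ in the current one, so after all $r$ steps we retain dimension at least $n-2r$, and the hypothesis $r\le n/2$ guarantees this is nonnegative (so the final $U$ is nonempty and the statement is non-vacuous). The set $S$ is built up as the collection of indices $i$ for which the ``$U'\subset H_i$'' branch was taken. The main obstacle I anticipate is getting the dimension bookkeeping tight enough: naively, avoiding an affine subspace costs one dimension and the induction would give $n-r$, so matching the claimed $n-2r$ is easy, but one must be careful that when $H_r\cap U'$ is \emph{nonempty} one genuinely can find a subspace of $U'$ of codimension $\le 2$ that lies \emph{entirely outside} $H_r$ (rather than just mostly outside) — the cleanest way is the ``fix the value of one affine functional'' trick described above, which uses only that $H_r\cap U'\subsetneq U'$. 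Everything else is routine affine linear algebra plus a finite induction.
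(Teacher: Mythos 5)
Your proof is correct, and it takes a genuinely different route from the paper. The paper's argument is a one-shot pigeonhole: consider the map $E\colon \F_p^n\to\{0,1\}^r$, $x\mapsto(1_{H_1}(x),\dots,1_{H_r}(x))$; some fiber $E^{-1}(\alpha)$ has size at least $p^n2^{-r}\geq p^{n-r}$, so $V:=\cap_{i\in S}H_i$ (with $S=\{i:\alpha_i=1\}$) has dimension at least $n-r$; then one picks $x_0\in E^{-1}(\alpha)$ and, for each $i\notin S$, a single affine functional $h_i$ vanishing on $H_i$ but not at $x_0$, and cuts $V$ down by the at most $r$ constraints $h_i(x)=1$. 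Your argument instead inducts on $r$, processing one $H_i$ at a time; in each step you lose at most one dimension (none if $U'\subseteq H_r$ or $U'\cap H_r=\emptyset$, and exactly one otherwise, by fixing the value of one affine functional that separates $x_0$ from $H_r\cap U'$). You hedge about whether this really costs only one dimension per step, but it does: in the nontrivial case $H_r\cap U'$ is cut out of $U'$ by some family of affine equations $\ell_j=c_j$ that are genuinely nonconstant on $U'$ (a constant one would force $H_r\cap U'=\emptyset$ or $U'\subseteq H_r$, contradicting the case assumption), and any $x_0\in U'\setminus H_r$ violates at least one of them, so $\{x\in U':\ell_j(x)=\ell_j(x_0)\}$ is a codimension-one subspace of $U'$ disjoint from $H_r$. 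Thus your induction actually yields the stronger bound $\dim U\geq n-r$, whereas the paper's pigeonhole argument gives $n-2r$ because it pays separately for the ``$\in$'' information (via the size of the fiber, costing $r$) and the ``$\notin$'' information (via the $r$ extra cutting equations). Both approaches are fine for the application; the pigeonhole proof is shorter to state, while yours is more elementary and tighter in the dimension count.
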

\begin{proof}
	Let $E\colon \F_p^n\to \{0,1\}^r$ be the map given by $x\mapsto (1_{H_1}(x),\dots, 1_{H_r}(x))$.
	By the pigeonhole principle, there is $\alpha\in \{0,1\}^r$ such that $|E^{-1}(\alpha)|\geq p^n2^{-r} \geq p^{n-r}$. Denote by $S\subset [r]$ the set of indices~$i$ such that $\alpha_i=1$. It is clear that $E^{-1}(\alpha) = \cap_{i\in S}H_i\setminus \cup_{i\notin S}H_i$ (if $S=\emptyset$, define $\cap_{i\in S}H_i=\F_p^n$). Now define $V = \cap_{i\in S}H_i$, this is an affine subspace of dimension at least $n-r$. Pick an arbitrary $x_0\in E^{-1}(\alpha)$, so~$x_0\notin H_i$ for~$i\notin S$. This implies that $\forall i\notin S$ there is an affine equation $h_i$ such that $h_i(x_0)=1$ but~$h_i(x)=0$ for all $x\in H_i$. The affine subspace we are looking for is
	\begin{equation*}
		U:=\{x\in V\colon \forall i\notin S\quad h_i(x)=1\}. 
	\end{equation*}
	Note that $U$ is not empty since $x_0\in U$. Since we only add at most $r$ extra equations, the dimension of $U$ is at least $n-2r$. 
\end{proof}

\paragraph{Proof of Theorem \ref{thm:main}}
	Let $P\colon \F_p^n\to\T$ be the nonclassical polynomial of degree three given by the corresponding $n$-qudit magic state, depending on the prime $p$. For $p=2$, $p=3$ and $p>3$ it is given by the (polynomials in the phase of the) amplitudes of Equation (\ref{eq:magicstate_p_2}), (\ref{eq:magicstate_p3}) and (\ref{eq:magicstate_p5}) respectively. 
	Let $\ket{\psi}$ be the corresponding $n$-qudit magic state, i.e.
	\begin{align*}
		\ket{\psi}= p^{-n/2}\sum_{x\in \F_p^n}e(P(x))\ket{x}.
	\end{align*}
	Denote by $r$ the stabilizer rank of $\ket{\psi}$, so there is a decomposition
	\begin{align*}
		\ket{\psi}=\sum_{i=1}^{r}c_i\ket{\phi_i},
	\end{align*}
	for some constants $c_i$ and stabilizer states $\ket{\phi_i}$. Each such $\phi_i$ is defined on an affine subspace~$H_i\subset \F_p^n$. Let $C>0$ be a large enough constant\footnote{Given the prime $p$ and the constant $c$ from Proposition \ref{prop:correlation_Tn_with_quadratic}, the constant $C$ should satisfy $C>2p/c$.}. If $r>n/C$, we are done. So assume~$r\leq n/C$.
	Then, we have that
	
	\begin{align*}
		e(P(x))&=p^{n/2}\braket{x}{\psi}=p^{n/2}\sum_{i=1}^rc_i\braket{x}{\phi_i}\\
		&=\sum_{i=1}^rc'_ie(Q_i(x))1_{H_i}(x),
	\end{align*}
	where $Q_i$ is a quadratic polynomial on $H_i$ and $c_i'=p^{(n-\dim(H_i))/2}c_i$.
Then by Claim \ref{lem:large_affine_subspace} (using~$r\leq~n/C$), there is an affine subspace $U$ of dimension $cn$ for some $c\geq 0.99$ and a non-empty subset $S\subset [r]$ such that $\forall x\in U$

		\begin{equation*}
	1_{H_i}(x)=
	\begin{cases*}
		1 & if $i\in S$ \\
		0 & otherwise.	
	\end{cases*}
\end{equation*}

	 Let $A\colon \F_p^{cn}\to\F_p^n$ be an affine linear map such that $U=\{A(y)\colon y\in\F_p^{cn}\}$. Let $P'\colon\F_p^{cn}\to\T$ be the polynomial given by $P'(y) = P(A(y))$. Then the restriction of the above decomposition of $e(P(x))$ to $U$ implies that
	\begin{align}\label{eq:P_restricted}
		e(P'(y))=\sum_{i\in S}c'_ie(Q'_i(y)),
	\end{align}
	where $Q'_i(y):=Q_i(A(y))$ and we have that $Q'_i$ is a nonclassical polynomial of degree at most two. By Proposition \ref{prop:correlation_Tn_with_quadratic} and (the contrapositive of) \ref{prop:rank_implies_uniform} the polynomial $P$ has high rank: $\mathrm{rank}(P)\geq \Omega(n)$. By Lemma \ref{lem:restrict_pol_subspace]} it follows that $P'$ is still cubic and we have $\mathrm{rank}(P')\geq \Omega(n)$ (this uses that $C$ is a large enough constant).
	But \eqref{eq:P_restricted} is a decomposition in terms of nonclassical polynomial phase functions of degree at most two. By Lemma~\ref{lemma:rank->frank} we have $|S|\geq \Omega(n)$ so that $r \geq \Omega(n)$.\qed

\section{Discussion}\label{sec:stab_discussion}
From the above proof, we can immediately conclude that it is not possible to get super-linear lower bounds on the stabilizer rank of $n$-qudit magic states using these techniques. This is due to the use of Claim~\ref{lem:large_affine_subspace}.
However, there is no obvious obstruction to get super-linear lower bounds on the number of stabilizer states needed in a decomposition of $n$-qudit magic states where all the stabilizer states are defined on the full space $\F_p^n$. The graph states (classical quadratic polynomials) are for example in this set. The possibility of such a super-linear lower bound hinges on the relationship between the rank of a polynomial and its Fourier rank: the $d$-rank of a polynomial~$P$ on $n$ variables is at most $n$, whereas the degree-$d$ Fourier rank of a polynomial is at most $p^n$. Lemma~\ref{lemma:rank->frank} only shows that $\mathrm{frank}_d(e(P))\geq\mathrm{rank}_d(P)$. Is this relation optimal, or can we expect much better?
\begin{problem}\label{prob:rank_frank}
	Let $d\geq 2$ and $P\colon \F_p^n\to \T$ be a polynomial. Is it true that
	\begin{align*}
		\mathrm{frank}_d(e(P))\geq \omega(\mathrm{rank}_d(P))?
	\end{align*}
\end{problem}
A positive answer to this question would not only show that the $n$-qubit magic state needs super-linear many stabilizer states defined on the full space $\F_p^n$, but would also have implications in another field. For this, let us consider the case $p=2$.

Let $\mathrm{AND}(x) = |x_1x_2\cdots x_n|/2$ be the (classical) polynomial giving the AND function. The ``quadratic uncertainty principle'' \cite{filmus2014real} is a conjecture that states that any decomposition
\begin{align}\label{eq:qup}
	e(\mathrm{AND}(x)) = \sum_{i=1}^r c_ie(Q_i(x)),
\end{align}
where $Q_i$ are classical quadratic polynomials, must have $r\geq 2^{\Omega(n)}$. Note that Definitions \ref{def:rank_poly} and \ref{def:frank} of rank and frank are still valid if we only allow \emph{classical} polynomials, which we will denote by $\mathrm{rank}'$ and $\mathrm{frank}'$. The best known lower bound is $\mathrm{frank}'_3(e(\mathrm{AND}))\geq n/2$ \cite{williams:LIPIcs:2018:8884}. The proof of this uses the Chevalley-Warning theorem in an elegant way. It shows, implicitly, that the 3-$\mathrm{rank}'$ of the AND function is at least $n/2$, i.e. near maximal rank. This way of looking at it fits very well with Lemma \ref{lemma:rank->frank}. What is actually shown in \cite{williams:LIPIcs:2018:8884} is the same lower bound for the NOR function, i.e. $\mathrm{NOR}(x) = |1+x_1|\cdots |1+x_n|/2$. Lower bounds on the NOR function imply the same lower bounds on the AND function (it is the same function in a different basis). The proof also works if we allow polynomials of degree at most a constant.

\begin{theorem}[\cite{williams:LIPIcs:2018:8884}, generalized]
	Let $\mathrm{NOR}\colon \F_2^n\to\T$ be the function defined above and let~$d\geq 3$ be an integer (constant). Then $\mathrm{frank}'_d(e(\mathrm{NOR}))\geq n/(d-1)$.
\end{theorem}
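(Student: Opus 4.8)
The plan is to follow the Chevalley–Warning strategy of Williams \cite{williams:LIPIcs:2018:8884}, lifting it verbatim from the quadratic case to the degree-$d$ case. Suppose for contradiction that $e(\mathrm{NOR})$ has a degree-$d$ classical Fourier decomposition of length $r<n/(d-1)$, i.e.
\begin{align*}
	e(\mathrm{NOR}(x)) = \sum_{i=1}^r c_i e(Q_i(x)),
\end{align*}
where each $Q_i\colon\F_2^n\to\T$ is (the $\iota$-image of) a classical $\F_2$-polynomial of degree at most $d-1$, hence $e(Q_i(x))=(-1)^{q_i(x)}$ for an honest polynomial $q_i\in\F_2[x_1,\dots,x_n]$ of degree $\le d-1$. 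The left-hand side equals $1$ at $x=(1,\dots,1)$ and $1$ everywhere else except... — more precisely $e(\mathrm{NOR}(x))=-1$ exactly when $|1+x_1|\cdots|1+x_n|$ is odd, i.e. when $x=\mathbf 0$, and $+1$ otherwise. So the identity says $\sum_i c_i(-1)^{q_i(x)}$ is a fixed constant $a$ for all $x\ne\mathbf 0$ and equals $a-2$ at $x=\mathbf 0$ (after the right normalization of signs/constants; one absorbs $a$ into the count).

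First I would turn this into a statement about a common zero of a low-degree polynomial system. Normalizing the $q_i$ so that $q_i(\mathbf 0)=0$ (absorbing the sign into $c_i$), the difference $e(\mathrm{NOR}(x)) - a$ vanishes on the nonzero points and is nonzero at the origin; writing $(-1)^{q_i(x)} = 1 - 2q_i(x)$ over the integers lets one encode the $r$ functions $q_1,\dots,q_r$ together with a bookkeeping linear form as a system of at most $r(d-1)+1 \le n$ homogeneous-ish polynomial equations in $n$ variables of total degree $\le d-1$ each — exactly as in \cite{williams:LIPIcs:2018:8884}, except each $q_i$ now contributes degree $d-1$ rather than $2$, which is why the bound degrades from $n/2$ to $n/(d-1)$. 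The precise packaging is the combinatorial heart: one wants a polynomial system whose solution set is forced by Chevalley–Warning to contain a point other than the obvious one, contradicting that $\mathrm{NOR}$ distinguishes $\mathbf 0$ from the rest.

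Then I would invoke the Chevalley–Warning theorem: a system of polynomials over $\F_2$ in $n$ variables whose degrees sum to less than $n$ has a number of common solutions divisible by $2$; in particular it cannot have exactly one solution. Since by construction $x=\mathbf 0$ is a solution and, by the assumed short decomposition together with the defining property of $\mathrm{NOR}$, it is forced to be the \emph{only} solution when $\sum_i \deg q_i + 1 \le r(d-1)+1 \le n-1 < n$, we get the contradiction. Hence $r \ge n/(d-1)$, i.e. $\mathrm{frank}'_d(e(\mathrm{NOR}))\ge n/(d-1)$.

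The main obstacle is the middle step: setting up the polynomial system so that ``$e(\mathrm{NOR})$ has a short degree-$d$ decomposition'' really does translate into ``this degree-$(d-1)$ system has a unique common zero,'' keeping careful track of the $\pm1$ vs $\{0,1\}$ encoding and of the additive constant $a$ so that no spurious degree is incurred and the count stays at $r(d-1)+1$. Everything downstream (the degree arithmetic $r<n/(d-1)\Rightarrow r(d-1)<n$ and the application of Chevalley–Warning) is routine; the content is entirely in reproducing Williams's encoding with $d-1$ in place of $2$ and checking nothing else changes, which is precisely why the theorem is stated as a direct generalization.
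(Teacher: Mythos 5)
Your approach is the same as the paper's: apply Chevalley--Warning to the degree-$(d-1)$ polynomials appearing in the decomposition, obtain a second common zero, and derive a contradiction with the fact that $e(\mathrm{NOR})$ separates the origin from everything else. The paper just modularizes it differently, first proving $\mathrm{rank}'_d(\mathrm{NOR})\geq n/(d-1)$ and then invoking the easy inequality $\mathrm{frank}'_d\geq\mathrm{rank}'_d$ (Lemma~\ref{lemma:rank->frank}); your proposal works directly with the Fourier decomposition, which is equivalent since Lemma~\ref{lemma:rank->frank} is precisely the observation that a frank-decomposition induces a rank-decomposition. However, your ``packaging'' is more complicated than it needs to be, and the extra complications are where your stated worries come from. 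You do not need a bookkeeping linear form, nor the integer identity $(-1)^{q_i}=1-2q_i$, nor a system of ``$r(d-1)+1$'' equations. After normalizing $q_i(\mathbf 0)=0$ for all $i$, the system is just $q_1=\cdots=q_r=0$, whose total degree is $\leq r(d-1)<n$, so Chevalley--Warning gives a second common zero $x^*\neq\mathbf 0$. Then the right-hand side of the decomposition at $x^*$ equals $\sum_i c_i=\sum_i c_i(-1)^{q_i(\mathbf 0)}=e(\mathrm{NOR}(\mathbf 0))=-1$, while $e(\mathrm{NOR}(x^*))=+1$, a contradiction. (If you did insist on the extra linear form, the sum of degrees would become $r(d-1)+1$ and the strict inequality required by Chevalley--Warning would degrade your bound by a constant; better to drop it, as it plays no role.) With that simplification your argument matches the paper's exactly.
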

\begin{proof}
	We will show that $\mathrm{rank}'_d(\mathrm{NOR})\geq n/(d-1)$. Having shown this, the result follows immediately from Lemma \ref{lemma:rank->frank} (which also holds for $\mathrm{rank}'$ and $\mathrm{frank}'$).
	
	Let $r=\mathrm{rank}'_d(\mathrm{NOR})$. So there are (classical) polynomials $Q_1,\dots,Q_r$ of degree at most $d-1$ such that there is a function $\Gamma\colon\T^r\to\T$ such that
	\begin{align*}
		\mathrm{NOR}(x) = \Gamma(Q_1(x),\dots, Q_r(x)).
	\end{align*}
	We may assume without loss of generality that $Q_i(0)=0$ for all $i=1,\dots , r$ (since $\Gamma$ can always add/subtract a constant to/from $Q_i$). Assume that $r<n/(d-1)$. By the Chevalley-Warning theorem, the polynomials $Q_1,\dots, Q_r$ have another common root $x\neq (0,\dots, 0)$. This contradicts the fact that $\mathrm{NOR}(x)=1/2$ if and only if $x=(0,\dots,0)$. 
\end{proof}

Since the NOR function has $\mathrm{rank}'_3(\mathrm{NOR})\geq n/2$, a positive answer to Problem \ref{prob:rank_frank} would show that the number of quadratic polynomials needed in (\ref{eq:qup}) is super-linear. As noted before, Sanyal~\cite{sanyal2019fourier} showed that $\mathrm{frank}'_2(e(P))\geq\mathrm{rank}'_2(P)^2$. An analogue of this result for $d>2$ would quadratically improve the best lower bound on $\mathrm{frank}'_3(\mathrm{NOR})$.

\bibliographystyle{plainnat}
\bibliography{stab_prime.bib}

\end{document}